\title{R-BBG$_2$: Recursive Bipartition of Bi-connected Graphs}
\author{Ngoc-Tu Nguyen\inst{1} and Zhi-Li Zhang\inst{1} }
\address{Department of Computer Science \& Engineering\\
University of Minnesota, Twin Cities\\
Minneapolis MN 55455, USA.
\email{nguy3503@umn.edu and zhzhang@cs.umn.edu }
}
\begin{document}

\newtheorem{thm}{Theorem}
\newtheorem{lma}{Lemma}
\newtheorem{defi}{Definition}
\newtheorem{remark}{Remark}
\newtheorem{proof}{Proof}
\newtheorem{co}{Corollary}
\newtheorem{po}{Property}
\newtheorem{cl}{Claim}

\maketitle

\begin{abstract}
Given an undirected graph $G(V, E)$, it is well known that partitioning a graph $G$ into $q$ {\it connected} subgraphs of  equal or specificed sizes is in general NP-hard problem. On the other hand, it has been shown that the q-partition problem is solvable in polynomial time for q-connected graphs. For example, efficient polynomial time algorithms for finding 2-partition (bipartition) or 3-partition of 2-connected or 3-connected have been developed in the literature. In this paper, we are interested in the following problem: given a bi-connected graph $G$ of size $n$, can we partition it into two (connected) sub-graphs, $G[V_1]$ and $G[V_2]$ of sizes $n_1$ and $n_2$ such as both $G[V_1]$ and $G[V_2]$ are also bi-connected (and $n_1+n_2=n$)? We refer to this problem as  the recursive bipartition problem of bi-connected graphs, denoted by {\it R-BBG$_2$}. We show that a ploynomial algorithm exists to both decide the recursive bipartion problem {\it R-BBG$_2$} and find the corresponding bi-connected subgraphs when such a recursive bipartition exists.
\end{abstract}

\begin{keywords}
Bipartition, bi-connected, recursive bipartition.
\end{keywords}

\section{Introduction} \label{section:introduction}
Nowadays, graph partitioning has been widely studied and applied for image processing, data bases, operating systems and cluster analysis applications \cite{LUCERTINI1993227,MARAVALLE1997217,BECKER1983101,Lucertini1989}. Graph partitioning is a class of the optimization problem, where we expect to break a given graph into parts satisfying
certain constraints. Given a graph $G(V,E)$ with weights on the vertices, the concept of {\it partition} is formalized as follows: find a partition of the vertex set $V$ into $q$ parts $V_1, V_2, \ldots, V_q$ such that the weight of every part is as equal as possible. The problem is called {\it Max Balanced Connected q-Partition Problem ($BCP_q$)} \cite{Wang2013,Becker2001,Perl:1981:MTP:322234.322236}.

The unweighted version of $BCP_q$ is a special case in which all vertices have weight $1$ and the graph is expected to be bipartitioned into two parts $V_1$ and $V_2$ in which each induces a connected subgraph of G, namely 1-BCP$_2$.
For simplicity, throughout this paper we use notation $BCP_2$ to refer to bipartition problem of unweighted version of $BCP_2$. The $BCP_2$ is NP-hard in general \cite{DYER1985139} and will be defined formally in Section \ref{section:preliminaries}.
However, $BCP_2$ is solvable in polynomial time for bi-connected graphs \cite{SUZUKI1990227}. In \cite{SUZUKI1990227}, the authors introduced a linear algorithm for bipartition of bi-connected graphs for $BCP_2$. Later, the polynomial algorithms for $q \geq 2$ were also obtained. In $1990$, the authors in \cite{Miyano} proposed an $O(n^2)$ algorithm to find 3-partition for 3-connected graph. In \cite{NAKANO1997315}, Nakano, Rahman and Nishizeki introduced a linear-time algorithm for four-partitioning four-connected planar graphs. The authors in \cite{Ma1994} covered the problem in general. They presented an $O(q^2n^2)$ algorithm to find q-partition of q-connected graph.

Especially, many researchers pay a lot of attention to solve the bipartition problem. Most of them rely on geometrical properties of some specific graphs \cite{10.1007/978-1-4613-8369-7_3, Hendrickson:1995:ISG:203046.203060, doi:10.1137/S1064827593255135} such as mesh and are thus only applicable for some limited problems. Although there are polynomial algorithms of finding a 2-partition (bipartition), 3-partition, or q-partition for the undirected bi-connected, 3-connected, or q-connected graph, respectively, there is no general algorithm of recursive finding partition of graph. The requirement that makes the problem interesting, and also difficult, is that each part has to induce a bi-connected subgraph of $G$.
In this paper a decision algorithm of recursive finding bipartition of bi-connected graph is investigated.

The remaining of this paper is organized as follows. In Section \ref{section:preliminaries} we give some definitions and establish
the notation. In Section \ref{section:bipartition} we show some properties of bipartition with bi-connected graphs. We also present a decision algorithm to solve the recursive bipartition problem of bi-connected graphs, namely decision algorithm in Section \ref{section:decision}. Section \ref{section:conclude} concludes the paper.

\section{Definitions and Notation} \label{section:preliminaries}

Let $G(V, E)$ be an undirected bi-connected graph in which for every vertex
$u \in V$, $G - \{u\}$ is connected. Let $|V| = n$ ($n \geq 6$) be the number of vertices/nodes. In this paper we shall use terminology vertex and node, bi-connected and 2-connected interchangeably. For any subset $V' \subseteq V$ , we denote by $G[V']$ the subgraph of $G$ induced by $V'$; and we denote by $w(V')$ the number of vertices in $V'$, that is, $w(V')=|V'|$. For any $v \in V$, let $N_v$ be a set of $v$'s neighbors in which any $v' \in N_v$ is adjacent to $v$.
We first show the decision version of bipartition problem as follows:

\textbf{INSTANCE:} $G(V, E)$, $s_1, s_2 \in V$, $(s_1, s_2) \in E$, $n_1, n_2 \geq 0$.

\textbf{QUESTION:} Does there exist a bipartition $V_1, V_2$ of vertex set $V$ such that $s_1 \in V_1$, $s_2 \in V_2$, $w(V_1) = n_1$ and $w(V_2) = n_2$, and each of $V_1$ and $V_2$ induces a connected subgraph $G[V_1]$ and $G[V_2]$ of $G$?

\begin{figure}[h]
\center \subfigure{\includegraphics[width=8cm]{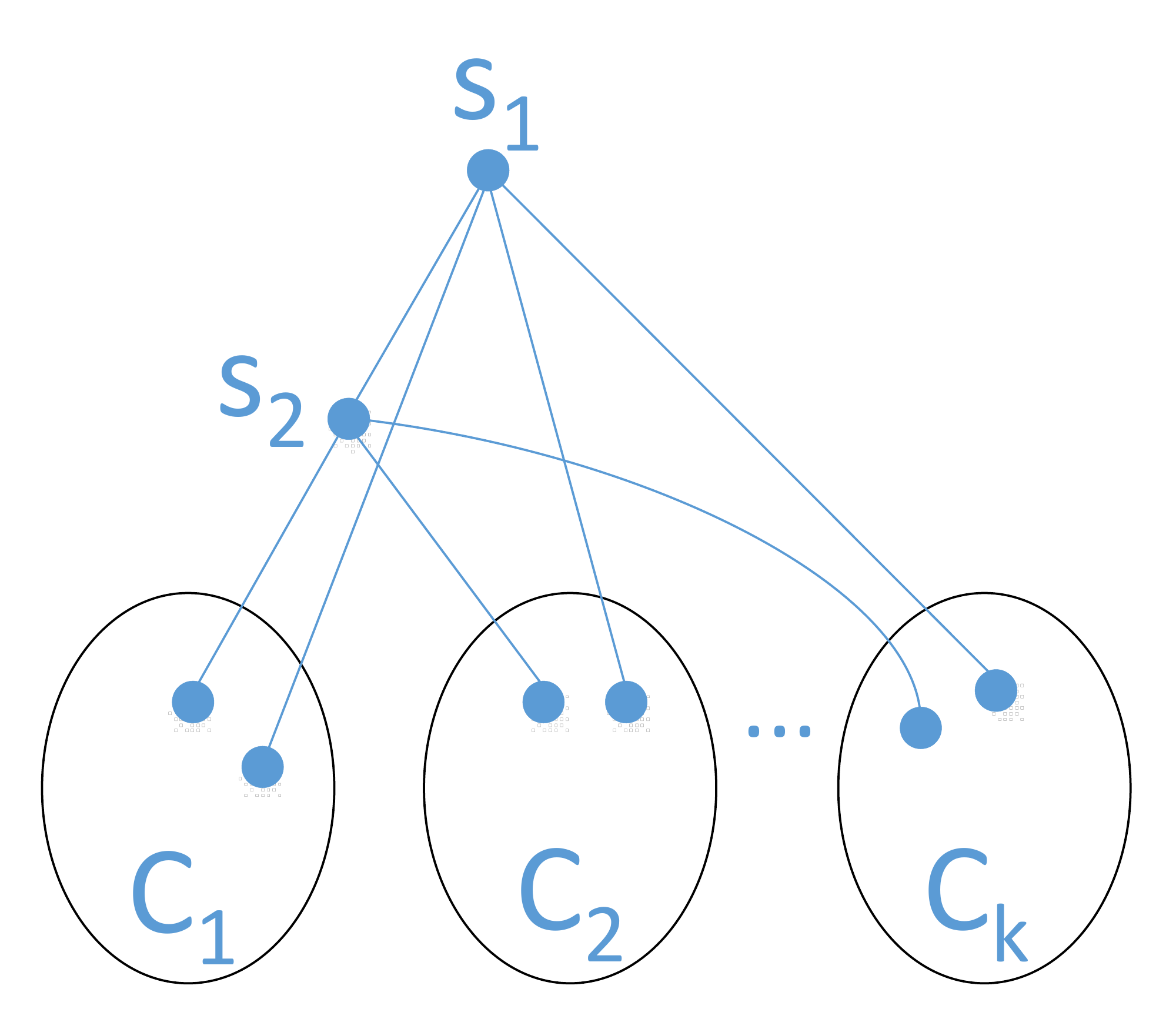}}
\caption{An example of $k$ $\textit{components}$ of $G$.} \label{Fig:partition}
\end{figure}

In this research we expect that a bipartition of graph $G$ is a division of its vertex set into two subsets $V_1$ and $V_2$ of exactly equal sizes, so throughout this paper we assume that $n_1 = n_2 = \frac{n}{2}$ (if graph has even number of vertices), or $n_1 = \lceil\frac{n}{2}\rceil$ and $n_2 = \lceil\frac{n}{2}\rceil - 2$ (if graph has odd number of vertices).
The bipartition problem is NP-hard in general \cite{DYER1985139}. Although it is NP-hard to tell whether there is a solution satisfying above conditions; however, in this case, as indicated by Theorem \ref{tr:lovasz} if the graph G is bi-connected there is always such a solution and it can be found easily in linear
time using the st-numbering between two nodes.

\begin{thm}\textbf{(Lovasz \cite{Lovasz1977241})} \label{tr:lovasz}
Let $G$ be a q-connected graph with $n$ vertices, $q \geq 2$, and let $n_1, n_2,\ldots, n_q$ be positive natural numbers such that $n_1 + n_2 + \ldots + n_q = n$. Then $G$ has a connected q-partition $(V_1, V_2, \ldots , V_q)$ such that $|V_i| = n_i$ for $i = 1, 2,\ldots, q$.
\end{thm}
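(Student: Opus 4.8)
The plan is to prove the statement by induction on $q$, with $q=2$ as the base case.

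For $q=2$: since $G$ is $2$-connected it has an edge, so pick adjacent vertices $s,t$. Classically (and derivable from an open ear decomposition of $G$), there is an $st$-numbering, i.e.\ a labelling $v_1=s,v_2,\dots,v_n=t$ of $V$ in which every internal vertex $v_i$ ($2\le i\le n-1$) has at least one neighbour with a smaller label and at least one with a larger label. Set $V_1=\{v_1,\dots,v_{n_1}\}$ and $V_2=\{v_{n_1+1},\dots,v_n\}$. Each $v_i\in V_1$ with $i>1$ has a strictly-smaller-labelled neighbour, necessarily again in $V_1$, so iterating this reaches $s$; hence $G[V_1]$ is connected, and symmetrically $G[V_2]$ is connected by following larger-labelled neighbours towards $t$. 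This construction runs in linear time, which is the remark made just before the statement.

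For the inductive step I would pass to the stronger terminal-prescribed form (the Gy\H{o}ri--Lov\'asz theorem): given in addition distinct vertices $t_1,\dots,t_q$, one can also require $t_i\in V_i$. Assume this holds for $(q-1)$-connected graphs and let $G$ be $q$-connected with terminals $t_1,\dots,t_q$ and sizes $n_1,\dots,n_q$; if all $n_i=1$ the partition into singletons works, so assume $n>q$. If $n_q=1$, put $V_q=\{t_q\}$ and apply the induction hypothesis to $G-t_q$, which is $(q-1)$-connected, with terminals $t_1,\dots,t_{q-1}$ and sizes $n_1,\dots,n_{q-1}$. If $n_q\ge 2$, the plan is to choose a neighbour $b$ of $t_q$ with $b\notin\{t_1,\dots,t_{q-1}\}$ such that contracting the edge $t_q b$ into a single terminal keeps the graph $q$-connected, apply the induction hypothesis with the same $q$ but fewer vertices and sizes $n_1,\dots,n_{q-1},n_q-1$, and then un-contract, assigning $b$ to the part containing $t_q$.

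The hard part will be precisely the existence of such a safely contractible $b$ --- equivalently, the fact that in a $q$-connected graph one can always peel off a connected set of prescribed size without dropping the connectivity of the remainder below $q-1$. This is genuinely delicate: a generic connected set of the right size need not have this property, and a generic edge contraction can decrease connectivity, so no routine exchange argument suffices. Lov\'asz's original proof obtains it via a homological/topological analysis of the poset of partial partitions (a size-maximal family of disjoint connected terminal-classes is forced to exhaust $V$, and homological considerations then let one tune the class sizes to the prescribed $n_i$), whereas Gy\H{o}ri gave an independent, purely combinatorial proof by an intricate induction. Since this paper only uses the case $q=2$ together with the linear-time $st$-numbering, I would cite one of these proofs for the general statement rather than reproduce it.
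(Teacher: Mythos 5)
The paper does not prove this statement at all: it is imported verbatim as a cited result of Lov\'asz, and the only thing the surrounding text adds is the remark that for $q=2$ the partition can be found in linear time via an $st$-numbering. So there is no ``paper proof'' to match against; the relevant comparison is whether your argument would stand on its own. Your $q=2$ case does: the $st$-numbering argument (every internal vertex has both a lower- and a higher-numbered neighbour, so each prefix and each suffix of the numbering induces a connected subgraph) is correct, complete, and is exactly the mechanism the paper alludes to. Since the paper only ever uses $q=2$, this is the part that matters here.

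For general $q$, however, your inductive step is not a proof, and the specific plan you sketch would fail if carried out literally. The step that peels off a neighbour $b$ of $t_q$ and contracts $t_qb$ while preserving $q$-connectivity is not available in general: for $q\ge 4$ there exist contraction-critical $q$-connected graphs in which no edge (in particular no edge at $t_q$) can be contracted without dropping the connectivity, so the ``safely contractible $b$'' you need does not always exist, and no local exchange argument rescues this. You correctly flag this as the hard part and ultimately defer to Lov\'asz's topological argument or Gy\H{o}ri's combinatorial one; that deferral is legitimate (and mirrors what the paper itself does by citing the theorem), but the contraction-based induction should be presented as a failed first attempt or omitted, not as the skeleton of the actual proof, since neither known proof proceeds that way. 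In short: your write-up is acceptable as a cited-result justification with a worked-out $q=2$ case, but the $q>2$ portion contains a genuine gap that is not closable along the lines you propose.
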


Even though there are polynomial algorithms of finding a 2-partition (bipartition) \cite{SUZUKI1990227}, 3-partition \cite{Miyano}, or q-partition \cite{Ma1994} for the undirected bi-connected, 3-connected, or q-connected graph, respectively, there is no general algorithm of recursive finding partition of graph. The requirement that makes the problem interesting, and also difficult, is that each part has to induce a bi-connected subgraph of $G$.
In this paper a decision algorithm of recursive finding bipartition of bi-connected graph is investigated.

\textbf{RESEARCH PROBLEM:} In this paper we will concentrate on bi-connected graph and show that we can {\it recursively} construct good bipartitions in which subgraphs are bi-connected and most equal size to each other in every bipartition, namely {\it R-BBG$_2$} problem (No related work guarantees that each part of bipartition has to induce a bi-connected subgraph).

We realize that the solution of {\it R-BBG$_2$} problem can be found by recursively applying bipartition as Algorithm \ref{alg:recursive}, the problem becomes how to optimally bipartition a bi-connected graph into two bi-connected and equal sized pieces in the first step.

\begin{algorithm}(bi-connected graph $G(V,E)$, $s_1, s_2 \in V$, $s_1 \neq s_2$, $n_1, n_2$)
\caption{Recursive algorithm.} \label{alg:recursive}
\begin{algorithmic}[1]

    \State Find: an "optimal" bipartition $G_1[V_1], G_1[V_2]$ of $G$

    \If {$w(V_1) \geq 3$ or $w(V_2) \geq 3$}

        \begin{itemize}
      		\item Recursively bipartition $G_1[V_1]$
      		\item Recursively bipartition $G_1[V_2]$
    	\end{itemize}
    	
    \EndIf

    \State Return the subgraphs: $G_1[V_1], G_1[V_2],\ldots,G_q[V_1], G_q[V_1]$;
\end{algorithmic}
\end{algorithm}

For simplicity of conveying the idea, note that the following notations and results are only for bi-connected graph. Suppose $C_1,\ldots,C_k$ are $k$ components of $G$ induced by removals of $s_1$, $s_2$, and all relative links of $s_1$ and $s_2$.
Throughout this paper, whenever we mention the removals of $s_1$ and $s_2$, it means we refer to the removals of $s_1$, $s_2$ and all relative links of $s_1$ and $s_2$.
Consider for example the graph $G$ shown in Fig. \ref{Fig:partition}. This example graph is divided into $k$ components (each is shown in a circle) by the removals of $s_1$ and $s_2$.

\section{Bipartition and bi-connected graph} \label{section:bipartition}

In general, a vertex $u$ of $G$ is an articulation point (or cut vertex) if $u'$s removal
disconnects graph. It is obvious that $G$ is bi-connected if it is connected and has no articulation points.
By the fact that $G$ is bi-connected, we have following property:

\begin{po} \label{po:2-connected}
G is a bi-connected graph if and only if for any two vertices $u,v \in V$, there are $2$ paths in $G$ joining $u$ and $v$ without intersecting inner vertex.
\end{po}

It also obvious that $C_i \bigcap N_{s_1} \geq 1$ and $C_i \bigcup N_{s_2} \geq 1$ for $1 \leq i \leq k$. Otherwise, either $s_1$ or $s_2$ is an articulation point of $G$.

\begin{figure}[h]
\center
\subfigure[]{\includegraphics[width=7.5cm]{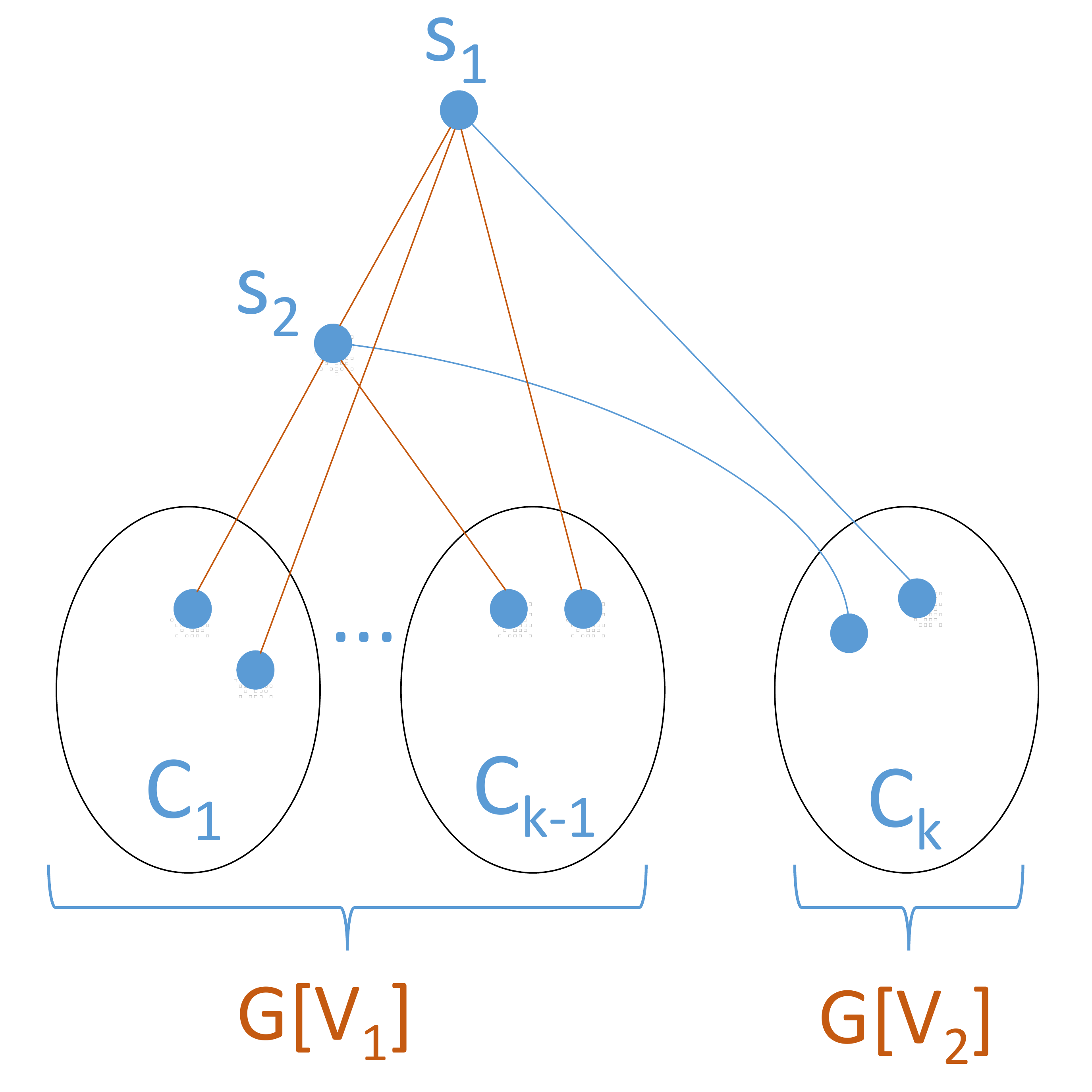} \label{Fig:s1-s1-same-1}}
\subfigure[]{\includegraphics[width=7.5cm]{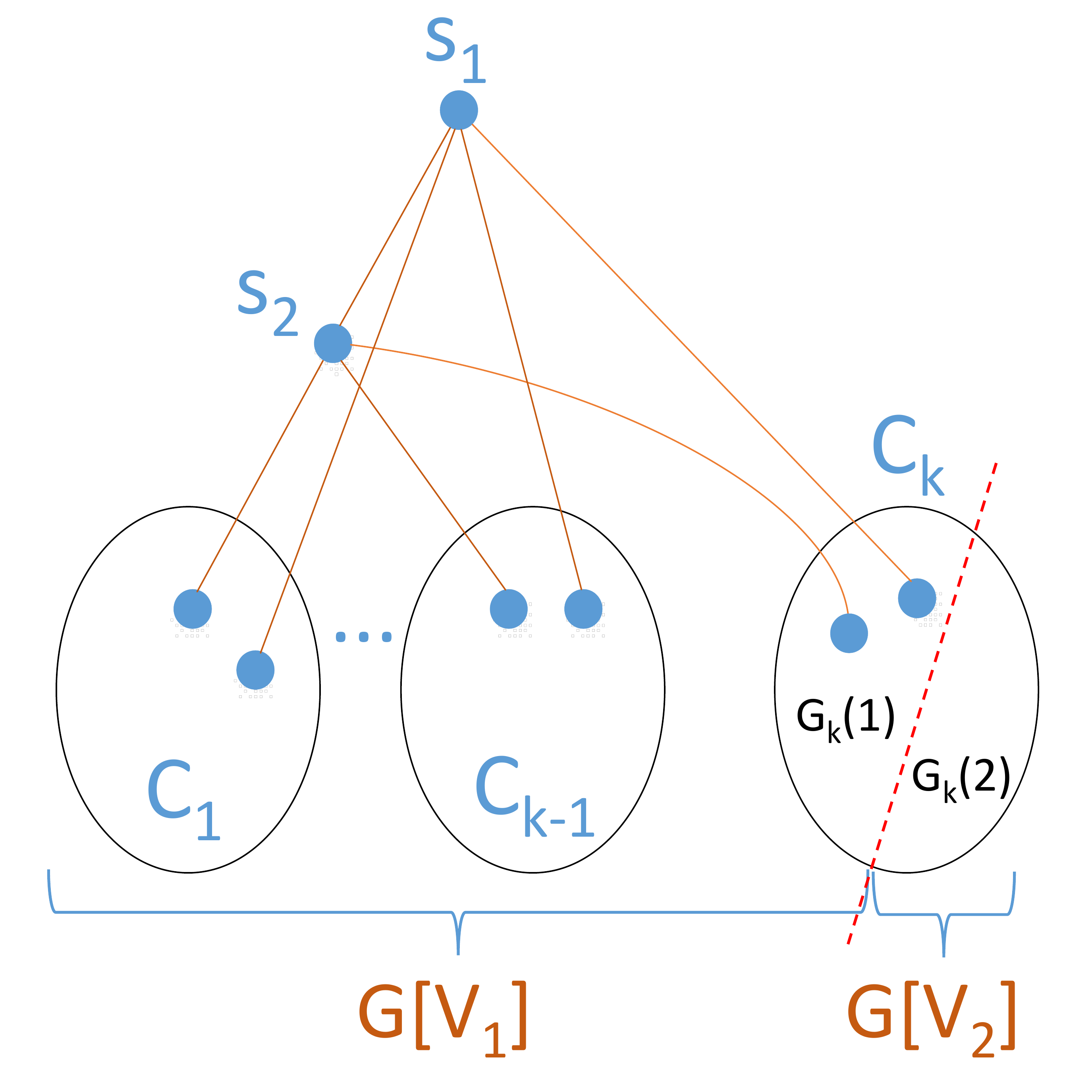} \label{Fig:s1-s1-same-2}}
\caption{An example to illustrate for Lemma \ref{lm:k-32}.}
\end{figure}

\begin{lma} \label{lm:k-3}
If $k > 2$, then $G$ cannot be partitioned into $2$ bi-connected subgraphs (each contains at least 3 nodes) such that $s_1 \in V_1$ and $s_2 \in V_2$.
\end{lma}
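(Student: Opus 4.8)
\medskip
\noindent\textbf{Proof idea.}
The plan is to argue by contradiction. Suppose $k > 2$ and yet there is a partition of $V$ into $V_1, V_2$ with $s_1 \in V_1$, $s_2 \in V_2$, $|V_1| \ge 3$, $|V_2| \ge 3$, and both $G[V_1]$ and $G[V_2]$ bi-connected. The one structural fact I will lean on is that $C_1,\ldots,C_k$ are the connected components of $G$ after deleting $s_1$ and $s_2$: hence each $C_i$ is nonempty and $G$ has no edge joining two distinct $C_i$'s.

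The key step is to show that $V_1 \setminus \{s_1\}$ is contained in a single component. Since $s_2 \notin V_1$, every vertex of $V_1$ other than $s_1$ lies in some $C_i$, and every edge of $G[V_1] - s_1$ stays inside one $C_i$ (there being no $G$-edge between distinct components). So if $V_1$ met two distinct components, $G[V_1] - s_1$ would be disconnected, i.e.\ $s_1$ would be a cut vertex of $G[V_1]$, contradicting bi-connectedness of $G[V_1]$; and $V_1$ does meet at least one component because $|V_1| \ge 3$. Hence $V_1 \subseteq \{s_1\} \cup C_a$ for exactly one index $a$. Applying the symmetric argument to $G[V_2]$ (using $s_1 \notin V_2$) gives an index $b$ with $V_2 \subseteq \{s_2\} \cup C_b$.

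The contradiction then follows by counting components. Deleting $s_1, s_2$ from $V = V_1 \cup V_2$ gives $C_1 \cup \cdots \cup C_k \subseteq C_a \cup C_b$, so the nonempty sets $C_1, \ldots, C_k$ all lie within two of them; since $k > 2$ this is impossible (a third component $C_c$ with $c \notin \{a,b\}$ is nonempty but disjoint from $C_a \cup C_b$). This contradicts the assumed partition, proving the lemma. The case $a = b$ causes no trouble --- the inclusion $C_1 \cup \cdots \cup C_k \subseteq C_a$ is only more restrictive.

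I expect the cut-vertex step to be the only place that needs care: one should phrase bi-connectedness in the operational form ``$G[V_1] - u$ is connected for every $u \in V_1$'' and check carefully that non-adjacency between distinct components of $G - \{s_1,s_2\}$ is inherited by $G[V_1] - s_1$. Once that step is in place, the symmetry argument for $V_2$ and the final component count are immediate.
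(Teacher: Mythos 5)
Your proof is correct and follows essentially the same route as the paper's: in both, the contradiction is that a part meeting two distinct components $C_i$ forces its distinguished vertex ($s_1$ or $s_2$) to become an articulation point of that part, which is impossible if the part is bi-connected. If anything, your version is more complete --- the paper jumps straight to ``without loss of generality, $C_i \cup C_j \cup \{s_1\}$ induces $G[V_1]$,'' whereas you supply the containment argument and the pigeonhole count over the $k>2$ nonempty components that justify why some part must meet two of them.
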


\begin{proof}
Let $G[V_1]$ and $G[V_2]$ be any bipartition of $G(V,E)$, where $s_1 \in V_1$, $s_2 \in V_2$, and $w(V_i) \geq 3$ for $i= 1,2$. Without loss of generality, suppose:

For $i \neq j$ $C_i \bigcup C_j \bigcup \{s_1\}$ induces $G[V_1]$.

It is easy to see $s_1$ is an articulation point of $G[V_1]$ and also an articulation point of $G$. This constitutes a contradiction because $G(V,E)$ is bi-connected and there is no articulation point exists in $G$, and thus completes the proof.
\end{proof}

\begin{figure}
\center
\subfigure[]{\includegraphics[width=7.5cm]{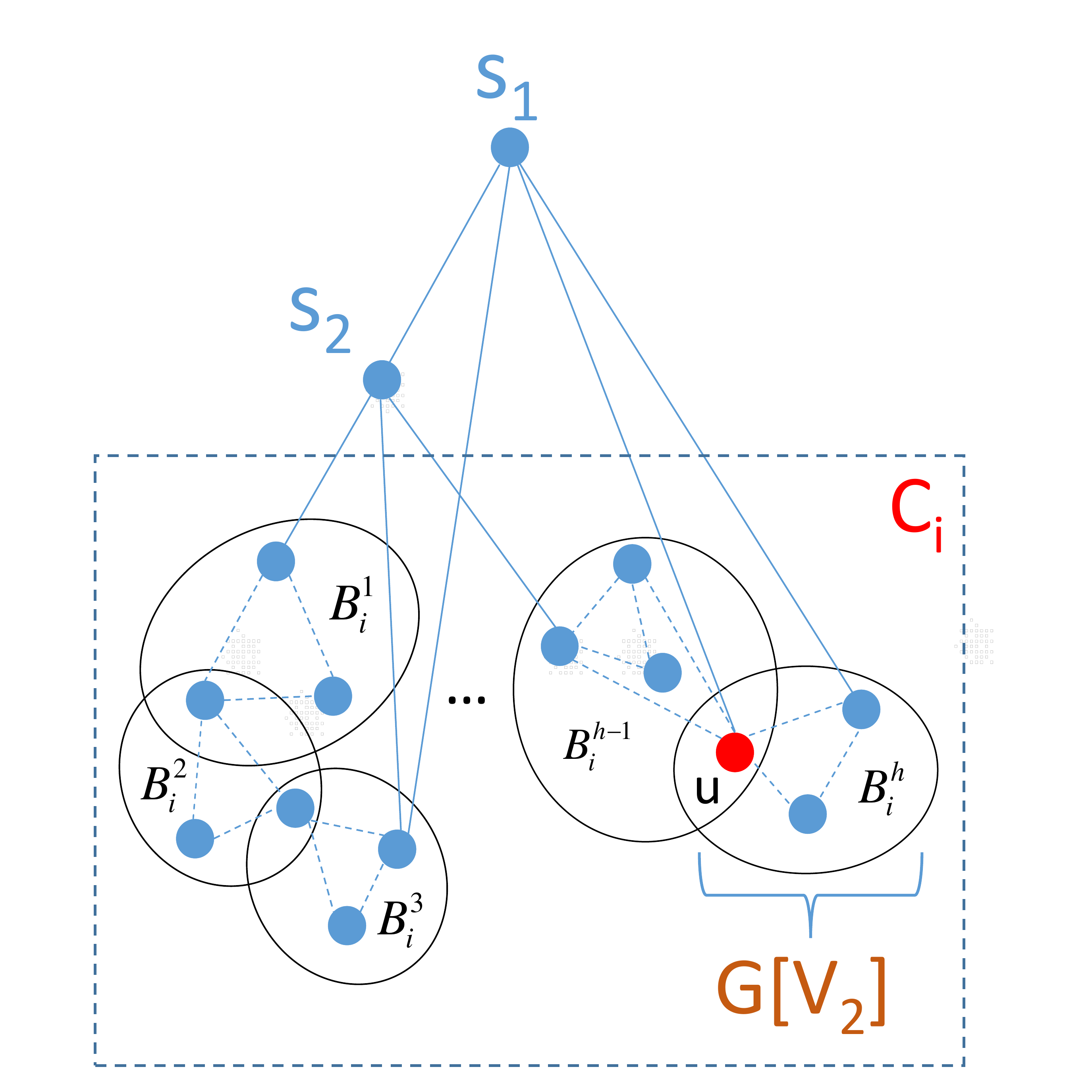} \label{Fig:leaf-block-cut}}
\subfigure[]{\includegraphics[width=7.5cm]{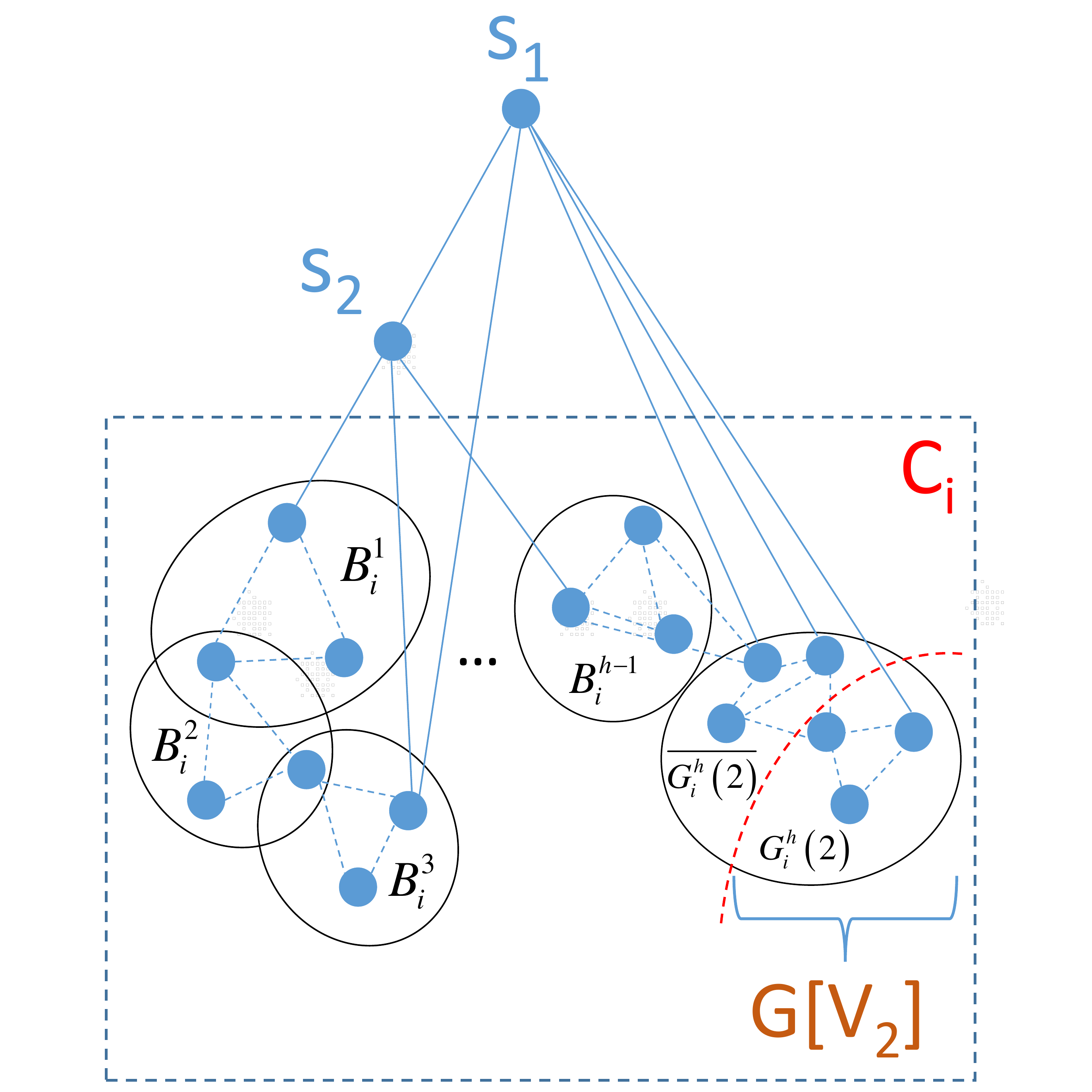} \label{Fig:leaf-block-cut-2}}
\caption{An example of {\it blocks} of component of graph.}
\end{figure}

\begin{lma} \label{lm:k-32}
If $k > 2$, suppose $G$ can be partitioned into two bi-connected subgraphs $G[V_1]$ and $G[V_2]$.
Note that Lemma \ref{lm:k-3} indicates that either $s_1, s_2 \in V_1$ or $s_1, s_2 \in V_2$. Without loss of generality, assume that $s_1, s_2 \in V_1$. Then $G[V_1]$ contains $k-1$ out of $k$ components $C_1,C_2, \ldots, C_k$. We also assume that $w(C_i) \leq n_1$ $\forall$ $1 \leq i \leq k-1$. Furthermore, we have either:

\begin{enumerate}
  \item ${{V}_{1}}=\bigcup\limits_{1\le i\le k-1}{{{C}_{i}}}\bigcup \left\{ {{s}_{1}},{{s}_{2}} \right\}$, $G[V_1]$ is bi-connected, and $G[V_2] = C_k$ is bi-connected (see Fig. \ref{Fig:s1-s1-same-1} for an illustration). $\bf{(I)}$\\
  \item or $V_k = C_k \bigcup \{s_1,s_2\}$, $G[V_k]$ then can be partitioned into two bi-connected subgraphs $G_k(1)$ and $G_k(2)$ in which $G_k(1) \bigcup G_k(2) = G[V_k]$, $G_k(2) = G[V_2]$, and $s_1,s_2 \in G_k(1)$ (see Fig. \ref{Fig:s1-s1-same-2} for an illustration). $\bf{(II)}$
\end{enumerate}
\end{lma}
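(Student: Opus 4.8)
The plan is to first use the connectivity of $G[V_2]$ to trap $V_2$ inside a single component of $G-\{s_1,s_2\}$, and then to split into the two alternatives according to whether $V_2$ fills that component. By Lemma~\ref{lm:k-3} we may take $s_1,s_2\in V_1$, so $V_2$ contains neither $s_1$ nor $s_2$ and $G[V_2]$ is also an induced subgraph of $G-\{s_1,s_2\}$. Since $G[V_2]$ is bi-connected it is connected, hence $V_2$ lies inside one component of $G-\{s_1,s_2\}$; relabelling, $V_2\subseteq C_k$. As every path between two distinct components $C_i,C_j$ runs through $s_1$ or $s_2$, we get $C_i\cap V_2=\emptyset$, and hence $C_i\subseteq V_1$, for all $i\le k-1$. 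This proves that $G[V_1]$ contains $k-1$ of the $k$ components, and $w(C_i)\le w(V_1)=n_1$ for $i\le k-1$ follows at once from $C_i\subseteq V_1$.

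Next I would split into two cases. If $V_2=C_k$, then $V_1=V\setminus C_k=\{s_1,s_2\}\cup\bigcup_{1\le i\le k-1}C_i$ and both $G[V_1]$ and $G[V_2]=G[C_k]$ are bi-connected by hypothesis; this is alternative $\mathbf{(I)}$. Otherwise $V_2\subsetneq C_k$; set $V_k:=C_k\cup\{s_1,s_2\}$ and $W_1:=V_1\cap V_k=\{s_1,s_2\}\cup(C_k\setminus V_2)$, which strictly contains $\{s_1,s_2\}$. Then $\{W_1,V_2\}$ is a partition of $V_k$ with $s_1,s_2\in W_1$, and $G[V_2]$ is bi-connected by hypothesis, so taking $G_k(1)=G[W_1]$ and $G_k(2)=G[V_2]$ will give alternative $\mathbf{(II)}$ once $G[W_1]$ is shown to be bi-connected.

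That last step is where the work lies, and I expect it to be the main obstacle. Observe that deleting $s_1$ and $s_2$ from $G[V_1]$ leaves exactly the components $C_1,\dots,C_{k-1}$ together with the connected components of $G[C_k\setminus V_2]$, and that $W_1$ is obtained from $V_1$ by deleting the branches $C_1,\dots,C_{k-1}$. So it suffices to prove the general fact: \emph{if $H$ is bi-connected, $ab\in E(H)$, and $H-\{a,b\}$ has at least two connected components, then removing any one of those components from $H$ again yields a bi-connected graph.} For this I would use the articulation-point characterization of bi-connectivity. The reduced graph is connected because bi-connectedness of $H$ forces every component of $H-\{a,b\}$ to be adjacent to both $a$ and $b$, and $ab$ is an edge; and no vertex $v$ of the reduced graph is a cut vertex, since any $u$--$w$ walk in $H-v$ that enters the deleted component $D$ must enter and leave $D$ through $\{a,b\}$, and can therefore be rerouted by staying at a single vertex of $\{a,b\}$ or along the edge $ab$ (the cases $v=a$ and $v=b$ are handled directly, using that every surviving component touches the other of $a,b$).

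Granting this fact, I peel $C_1,\dots,C_{k-1}$ off $G[V_1]$ one at a time; this is legitimate at every step because $G[V_1]-\{s_1,s_2\}$ has $k\ge 3$ components, so the process never reaches the degenerate two-vertex graph, and each $C_j$ is still a full component at the moment it is removed (distinct components share no edges). The end result is that $G[W_1]$ is bi-connected, which establishes alternative $\mathbf{(II)}$ and finishes the proof; the possible disconnectedness of $G[C_k\setminus V_2]$ plays no role, since those components are exactly the ones that survive the peeling.
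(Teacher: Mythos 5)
Your argument is correct, and it is in fact more complete than the paper's own proof. The first half of your proposal coincides with what the paper does: both use the connectedness of the bi-connected $G[V_2]$, together with the fact that any path between distinct components $C_i$, $C_j$ must pass through $s_1$ or $s_2\in V_1$, to conclude that $V_2$ meets exactly one component, so $V_2\subseteq C_k$ and $C_1,\dots,C_{k-1}\subseteq V_1$. At that point the paper stops: it simply asserts that "the conditions in $\bf{(I)}$ and $\bf{(II)}$ are required for $G[V_1]$ and $G[V_2]$ to be bi-connected," and never verifies the only nontrivial claim in alternative $\bf{(II)}$, namely that $G_k(1)=G\left[(C_k\setminus V_2)\cup\{s_1,s_2\}\right]$ is itself bi-connected. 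You supply exactly this missing piece via the peeling lemma (if $H$ is bi-connected, $ab\in E(H)$, and $H-\{a,b\}$ has at least two components, then deleting one whole component preserves bi-connectedness), and your sketch of its proof is sound: every component of $H-\{a,b\}$ must attach to both $a$ and $b$, any path segment through the deleted component enters at one of $a,b$ and exits at the other and so can be replaced by the edge $ab$, and the cases $v=a$ and $v=b$ reduce to the remaining components all touching the other endpoint. The iterated application to strip off $C_1,\dots,C_{k-1}$ is legitimate for the reasons you give ($C_k\setminus V_2\neq\emptyset$ in case $\bf{(II)}$ keeps at least two components available at every step). So where the paper's proof is really only a proof of the containment statement plus an unproved assertion, yours establishes the full dichotomy; the cost is the extra peeling lemma, which is however elementary and reusable elsewhere in the paper (e.g., it is implicitly needed again in Claim~\ref{cl:leaf-block-par}).
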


\begin{proof}
Note that since $s_1, s_2 \in G[V_1]$, we have that if there exists $i$ and $j$ such that $G[V_2] \bigcap C_i \neq \emptyset$, $G[V_2] \bigcap C_j \neq \emptyset$; on the other words, $G[V_2]$ includes two or more components of $G$, then $G[V_2]$ is disconnected. Hence, there exists exactly one components $C_i$ such that $G[V_2] \bigcap C_i \neq \emptyset$ (in this case, let $i = k$, we have $G[V_2] = C_k$). The conditions in $\bf{(I)}$ and $\bf{(II)}$ are required for $G[V_1]$ and $G[V_2]$ to be bi-connected.
\end{proof}

In this paper we are frequently dealing with {\it block}, and in the following, we will further characterizes $\bf{(I)}$ and $\bf{(II)}$ in terms of {\it blocks} of every component $C_i$, so let us define {\it block} and {\it block tree} of graph.
A block of $G$ is defined as a maximal bi-connected subgraph of $G$.
The block tree of $G$ \cite{cut1, cut2} is a bipartite graph $B(G)$ with bipartition
$(\mathcal{B}, \mathcal{A})$, in which $\mathcal{B}$ is the set of blocks of $G$, $\mathcal{A}$ is the set of articulation points of G. It is easy to see that a block $B \in \mathcal{B}$ and an articulation point $u \in \mathcal{A}$ are adjacent in $B(G)$ if and only if $B$ contains $u$. A block $B$ of $G$ is a leaf block if $B$ is a leaf of the block tree $B(G)$. Note that a
leaf block contains at most one articulation point of $G$.

\begin{lma} \label{lm:leaf-block}
For $1 \leq i \leq k$, suppose $C_i$ can be decomposed into $h$ blocks.
Let $\mathcal{B}_i = \{B_i^1, B_i^2, \ldots, B_i^h\}$ be the set of blocks of $C_i$.
Let $\mathcal{A}_i$ also be the set of articulation points of $C_i$.
Without loss of generality, assume that $B_i^h$ be a leaf block. Then, we have:

$|B_i^h \bigcap N_{s_1}| + |B_i^h \bigcap N_{s_2}| \geq 1$

\end{lma}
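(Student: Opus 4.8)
The plan is to argue by contradiction. Suppose the leaf block $B_i^h$ of $C_i$ contains no neighbor of $s_1$ and no neighbor of $s_2$, i.e. $|B_i^h\bigcap N_{s_1}|+|B_i^h\bigcap N_{s_2}|=0$; I will then exhibit an articulation point of $G$, contradicting the hypothesis that $G$ is bi-connected (Property~\ref{po:2-connected}).

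First I would dispose of the degenerate case $h=1$: then $C_i=B_i^h$, and since we already know $C_i\bigcap N_{s_1}\neq\emptyset$ (otherwise $s_1$ would be a cut vertex of $G$), the inequality holds at once. So assume $h\geq 2$. Then the block tree $B(C_i)$ has at least two block nodes, hence at least one articulation-point node, and the leaf $B_i^h$ is adjacent in $B(C_i)$ to exactly one node, which must be an articulation point; call it $a\in\mathcal{A}_i$. Thus $B_i^h$ contains exactly one articulation point of $C_i$, namely $a$, and since $B_i^h$ has an attaching cut vertex it has at least two vertices, so $B_i^h\setminus\{a\}\neq\emptyset$. Now I claim $a$ is an articulation point of $G$, and two observations do the work. (i) Every edge of $C_i$ lies in a unique block, so every $C_i$-neighbor of a vertex $u\in B_i^h\setminus\{a\}$ (such a $u$ is not an articulation point of $C_i$) lies again in $B_i^h$; equivalently, by the leaf-block property, every path in $C_i$ from $B_i^h\setminus\{a\}$ to $C_i\setminus B_i^h$ passes through $a$, so $B_i^h\setminus\{a\}$ is exactly one connected component of $C_i-\{a\}$. (ii) The only vertices of $G$ outside $\bigcup_j C_j$ are $s_1,s_2$, distinct components $C_j$ are pairwise non-adjacent in $G$, and by assumption no vertex of $B_i^h$ is adjacent to $s_1$ or $s_2$; hence no vertex of $B_i^h\setminus\{a\}$ has a $G$-edge leaving $C_i$. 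Combining (i) and (ii), in $G-\{a\}$ the nonempty set $B_i^h\setminus\{a\}$ has no edge to its complement; since $s_1\notin C_i$ implies $s_1\notin B_i^h\cup\{a\}$, that complement is nonempty, so $G-\{a\}$ is disconnected. This contradicts bi-connectedness, so the assumption is false and $|B_i^h\bigcap N_{s_1}|+|B_i^h\bigcap N_{s_2}|\geq 1$.

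The only delicate point — and hence the main obstacle — is observation (i): that a leaf block minus its unique attaching cut vertex is precisely one connected component of $C_i$ with that vertex deleted, and that edges of $C_i$ do not cross block boundaries. This is standard block/cut-vertex theory, but it has to be invoked carefully, and the borderline situations ($h=1$, or $B_i^h$ being a single bridge edge so that the other endpoint of $B_i^h$ simply has degree one once $a$ is removed) should be checked explicitly; none of them causes real difficulty.
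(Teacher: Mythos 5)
Your proposal is correct and follows essentially the same route as the paper's own proof: assume the leaf block meets neither $N_{s_1}$ nor $N_{s_2}$ and conclude that its unique attaching articulation point of $C_i$ is an articulation point of $G$, contradicting bi-connectedness. You merely spell out the details the paper dismisses as ``easy to see'' (observations (i) and (ii)) and explicitly cover the degenerate case $h=1$, which the paper's proof silently skips.
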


\begin{proof}
Suppose there exists the leaf block $B_i^h$ such that $|B_i^h \bigcap N_{s_1}| + |B_i^h \bigcap N_{s_2}| < 1$.
Then, it is easy to see that node $u \in B^h_i \bigcap \mathcal{A}_i$ is an articulation points of graph $G[C_i \bigcup \{s_1,s_2\}]$, and therefore also of $G$. This constitutes a contradiction because $G(V,E)$ is bi-connected and there is no articulation point exists in $G$, and thus completes the proof.
\end{proof}

Take Fig. \ref{Fig:leaf-block-cut} for example. Block $B^3_i$ and $B^h_i$ are two leaf blocks of component $C_i$ in which $B^3_i$ has links to connect to both $s_1$ and $s_2$, and $B^h_i$ has links to connect to $s_1$. Note that if $B^h_i \bigcap N_{s_1} = \emptyset$, then node $u$ ($u \in B^h_i \bigcap \mathcal{A}_i$) is an articulation point of $G$. From the properties of Lemma \ref{lm:leaf-block}, we now characterize $\bf{(II)}$.

Again, without loss of generality, we also assume that
there exists $G_i(2)$ is a bi-connected subgraph of a leaf block $B_i^h$. Let $\overline{G_i(2)}$ be the rest of block $B_i^h$
without $G_i(2)$ such that $\overline{G_i(2)} = B_i^h \backslash G_i(2)$ (see Fig. \ref{Fig:leaf-block-cut-2}).
Furthermore, we have $\overline{G_i(2)}$ $\bigcap N_{s_1} \neq \emptyset$.

\begin{figure}
\center \subfigure{\includegraphics[width=8cm]{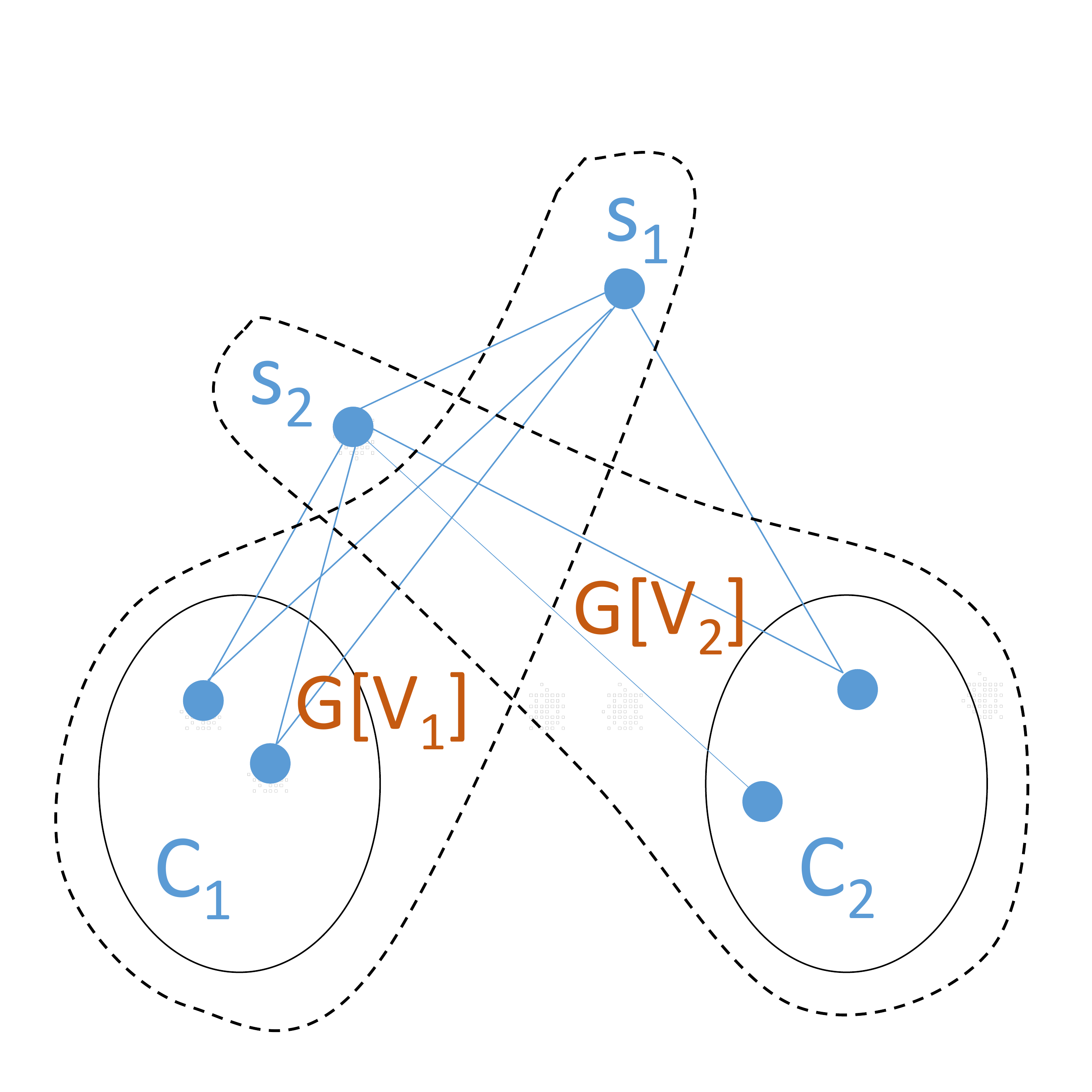}}
\caption{An example of graph $G$ with at most two $\textit{components}$.} \label{Fig:two-components}
\end{figure}

From above results, we have a complete characteristic of a bi-connected graph $G$.

\begin{cl} \label{cl:leaf-block-par}
Given a bi-connected graph $G(V,E)$ such that $\exists$ $(s_1,s_2) \in E |$ the removals of $s_1$ and $s_2$ lead to $k$ components $C_1, C_2, \ldots, C_k$, with $k > 2$, in terms of the blocks of $C_i$ ($1 \leq i \leq k$), $G$ can be bipartitioned into $2$ bi-connected subgraphs $G[V_1]$ and $G[V_2]$ with $s_1, s_2 \in V_1$, if and only if $G[V_2] = G^h_i(2)$ in which $G^h_i(2)$ is a subgraph of $B^h_i$, $B^h_i$ is a leaf block of $C_i$ ($1 \leq i \leq k$), $G^h_i(2)$ is bi-connected, and $\overline{G^h_i(2)} = B_i^h \backslash G^h_i(2)$ is bi-connected as well. We also have $\overline{G^h_i(2)} \bigcap N_{s_1} \neq \emptyset$ (otherwise node $v \in \overline{G^h_i(2)} \bigcap \mathcal{A}_i$ is an articulation of $G$). Note that $\overline{G^h_i(2)} = \emptyset$ if $G[V_2] = B^h_i$ (as shown in Fig. \ref{Fig:leaf-block-cut}).
\end{cl}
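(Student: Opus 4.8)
The plan is to prove the two directions of the ``if and only if'' separately, leaning on the structural dichotomy already established in Lemma~\ref{lm:k-32} together with the leaf-block properties of Lemma~\ref{lm:leaf-block}. Throughout I assume, as in Lemma~\ref{lm:k-32}, that $s_1,s_2\in V_1$; by Lemma~\ref{lm:k-3} this loses no generality when $k>2$. I would also fix the notation of Lemma~\ref{lm:k-32}: $G[V_2]$ meets exactly one component, say $C_k$, and $G[V_1]\supseteq\bigcup_{i\le k-1}C_i\cup\{s_1,s_2\}$.

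For the forward direction, suppose $G$ is bipartitioned into bi-connected $G[V_1],G[V_2]$. By the argument in the proof of Lemma~\ref{lm:k-32}, $G[V_2]$ is contained in a single component, and by case $\mathbf{(II)}$ of that lemma (case $\mathbf{(I)}$ being the degenerate sub-case where $G[V_2]=C_k$, which I will fold into the statement by allowing $\overline{G^h_i(2)}=\emptyset$ and $G^h_i(2)=B^h_i$) we have $G[V_2]\subseteq C_k$. The key step is to show $G[V_2]$ must lie inside a \emph{leaf} block of $C_k$ and in fact equal a bi-connected piece of it whose complement within that block is also bi-connected. First I would argue $G[V_2]$ is contained in a single block $B$ of $C_k$: if $G[V_2]$ straddled two blocks it would contain the articulation point joining them, but then removing that articulation point disconnects $G[V_2]$, contradicting bi-connectedness of $G[V_2]$. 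Next, $B$ must be a leaf block: if $B$ were internal, it would contain at least two articulation points of $C_k$; the ``other side'' of $C_k$ hanging off each such point must be entirely in $V_1$ (since $V_2\subseteq B$ minus nothing relevant), so in $G[V_1]$ that articulation point becomes a cut vertex separating those hanging pieces from $\{s_1,s_2\}\cup(\text{other components})$ — contradicting bi-connectedness of $G[V_1]$. Here Lemma~\ref{lm:leaf-block} is what guarantees a leaf block has a neighbor among $N_{s_1}\cup N_{s_2}$, which is needed so that $G[V_1]$ can ``reach'' $B\setminus G[V_2]$. Finally, writing $G^h_k(2):=G[V_2]$ and $\overline{G^h_k(2)}:=B\setminus G[V_2]$, bi-connectedness of $G[V_2]$ gives the first bi-connectedness assertion, and bi-connectedness of $G[V_1]$ forces $\overline{G^h_k(2)}$ to be bi-connected and to contain a neighbor of $s_1$ — otherwise the unique articulation point $v\in\overline{G^h_k(2)}\cap\mathcal{A}_k$ would separate $\overline{G^h_k(2)}$ (or the rest of $C_k$) from $s_1,s_2$ in $G$, so $v$ is a cut vertex of $G$.

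For the converse, assume $G[V_2]=G^h_i(2)$ is a bi-connected subgraph of a leaf block $B^h_i$ of some $C_i$, with $\overline{G^h_i(2)}=B^h_i\setminus G^h_i(2)$ bi-connected and $\overline{G^h_i(2)}\cap N_{s_1}\neq\emptyset$, and set $V_1=V\setminus V_2$. Then $G[V_2]$ is bi-connected by hypothesis, so it remains to check $G[V_1]$ is bi-connected, i.e.\ has no cut vertex. I would do this by cases on where a putative cut vertex $x$ of $G[V_1]$ could lie: (i) $x\in C_j$, $j\neq i$ — impossible, since $C_j\cup\{s_1\}$ (or $\cup\{s_1,s_2\}$) is untouched by the removal and $C_j$'s internal structure together with the $s_1,s_2$ attachments inherited from bi-connectedness of $G$ keeps it 2-connected to the rest; (ii) $x=s_1$ or $x=s_2$ — impossible, because every component $C_j$ ($j\neq i$) still attaches to both $s_1$ and $s_2$, and within $C_i$ the remaining part $C_i\setminus G^h_i(2)$ still attaches to $s_1$ (via the leaf-block neighbor of $s_1$ guaranteed by Lemma~\ref{lm:leaf-block} applied to the \emph{other} leaf blocks, plus $\overline{G^h_i(2)}\cap N_{s_1}\neq\emptyset$) and the $s_1$–$s_2$ edge plus another component gives a second path; (iii) $x\in C_i\setminus B^h_i$ or $x\in\overline{G^h_i(2)}$ — impossible, because $B^h_i$ is a \emph{leaf} block, so deleting $G^h_i(2)$ only removes non-cut vertices of $C_i$ (the articulation point of $B^h_i$ in $C_i$, if any, lies in $\overline{G^h_i(2)}$ by the condition $\overline{G^h_i(2)}\cap N_{s_1}\neq\emptyset$ forcing $\overline{G^h_i(2)}\neq\emptyset$ whenever that articulation point exists), hence $C_i\setminus G^h_i(2)$ stays connected and 2-attached to $\{s_1,s_2\}$, and $\overline{G^h_i(2)}$ is bi-connected internally.

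The main obstacle I anticipate is case~(iii) of the converse and the symmetric step in the forward direction: carefully ruling out cut vertices \emph{inside} $C_i$ after the surgery, and handling the boundary cases cleanly — when $G^h_i(2)=B^h_i$ (so $\overline{G^h_i(2)}=\emptyset$ and the leaf block is removed wholesale, which is exactly case $\mathbf{(I)}$ reached through $\mathbf{(II)}$), and when $C_i$ itself is a single block so there are no internal articulation points to worry about. These require a precise description of how a leaf block attaches to the rest of $C_i$ and how $C_i$ attaches to $\{s_1,s_2\}$; I would isolate that as the first lemma-style paragraph and then the case analysis becomes routine.
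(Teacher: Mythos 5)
The paper itself gives no proof of Claim~\ref{cl:leaf-block-par} --- it is asserted to follow ``from above results'', and the text immediately before it simply \emph{assumes} that $G_i(2)$ sits inside a leaf block --- so the real question is whether your argument closes the gaps the paper leaves open. It does not: the decisive step of your forward direction, that the block $B$ of $C_k$ containing $G[V_2]$ must be a \emph{leaf} block, is not correctly justified. You argue that if $B$ were internal, each of its (at least two) articulation points would become a cut vertex of $G[V_1]$ separating the portion of $C_k$ hanging off it from $\{s_1,s_2\}$. But those hanging portions are not attached to the rest of $G[V_1]$ only through the articulation point: each of them contains a leaf block of $C_k$, and Lemma~\ref{lm:leaf-block} guarantees precisely that every such leaf block has an edge directly to $s_1$ or $s_2$, and these edges survive in $G[V_1]$. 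So the articulation points of an internal block need not be cut vertices of $G[V_1]$ at all, and your contradiction evaporates. Concretely, take $C_k$ whose block tree is a path $B_1\,a_1\,B_2\,a_2\,B_3$ with $B_1$ attached to $s_1$ and $B_3$ attached to $s_2$, and carve a small bi-connected $G[V_2]$ out of the interior of the middle block $B_2$ so that $B_2\setminus G[V_2]$ remains bi-connected and contains $a_1,a_2$; nothing in your argument rules this out, and in such a configuration $G[V_1]$ can perfectly well be bi-connected. This is not a cosmetic omission: it is exactly the point where the ``only if'' direction is most in doubt, and as written your proof (and arguably the claim itself) does not survive it.

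There are smaller problems as well. In the converse direction, cases (i) and (ii) are asserted rather than proved (``keeps it 2-connected to the rest''), and in case (iii) the parenthetical claim that the articulation point of $B^h_i$ in $C_i$ must lie in $\overline{G^h_i(2)}$ does not follow from $\overline{G^h_i(2)}\cap N_{s_1}\neq\emptyset$: the stated hypotheses do not prevent that articulation point from being absorbed into $G^h_i(2)$, in which case $C_i\setminus B^h_i$ reaches $\overline{G^h_i(2)}$ only through $s_1,s_2$, and verifying that no cut vertex appears then depends on the attachment pattern of the \emph{other} leaf blocks of $C_i$, which neither your argument nor the claim's hypotheses pin down. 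Your closing paragraph correctly identifies case (iii) as delicate, but the unproved leaf-block step in the forward direction is the more serious failure.
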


Take Fig. \ref{Fig:leaf-block-cut-2} for example. Component $C_i$ is decomposed into blocks $B_i^1, B_i^2, \ldots, B_i^h$ in which $B_i^h$ is a leaf block of $C_i$. $G^h_i(2)$ is a subgraph of $B_i^h$, $G^h_i(2)$ is bi-connected and $w(G^h_i(2)) = n_2$. $\overline{G^h_i(2)}$ is the rest of block $B_i^h$ without $G^h_i(2)$, $\overline{G^h_i(2)}$ is bi-connected, and $\overline{G^h_i(2)} \bigcap N_{s_1} \neq \emptyset$. $G$ is bipartitioned into $G[V_1]$ and $G[V_2]$ with $G[V_2] = G^h_i(2)$, and $s_1, s_2 \in V_1$.

Now we consider the case in which given a bi-connected graph $G(V,E)$, such that $\forall (s_1,s_2) \in E$, the removals of $s_1$ and $s_2$ lead to at most two components $C_1$ and $C_2$.
It is easy to see $C_k \bigcap N_{s_1} \neq \emptyset$ and $C_k \bigcap N_{s_2} \neq \emptyset$ ($k= 1,2$) (otherwise either $s_1$ or $s_2$ is an articulation point of $G$).

\begin{po} \label{po:degree3}
Suppose $G$ can be bipartitioned into 2 bi-connected subgraphs, $G[V_1]$ and $G[V_2]$, such that
$s_1 \in V_1$, $s_2 \in V_2$, and $w(V_1) = n_1$, $w(V_2) = n_2$. Without loss of generality, assume that $V_1 = C_1 \bigcup \{s_1\}$ and $V_2 = C_2 \bigcup \{s_2\}$. Then, we have $C_1 \bigcap N_{s_1} \geq 2$, and $C_2 \bigcap N_{s_2} \geq 2$.
\end{po}

From the property \ref{po:degree3} we have:

\begin{po} \label{po:degree32}
Given a bi-connected graph $G(V,E)$, suppose $G$ can be bipartitioned into 2 bi-connected subgraphs, $G[V_1]$ and $G[V_2]$, such that $s_1 \in V_1$, $s_2 \in V_2$, and $w(V_1) = n_1$, $w(V_2) = n_2$. Then, $|N_{s_1}| \geq 3, |N_{s_2}| \geq 3$.
\end{po}

\begin{cl} \label{cl:two-components}
Given a bi-connected graph $G(V,E)$ such that $\exists$ $(s_1,s_2) \in E |$ the removals of $s_1$ and $s_2$ lead to at most $2$ components $C_1, C_2$. Graph $G$ can be bipartitioned into $2$ bi-connected subgraphs $G[V_1]$ and $G[V_2]$ with $s_1 \in V_1$, $s_2 \in V_2$ if and only if $C_i \bigcap N_{s_i} \geq 2$ ($i = 1,2$), $C_i \bigcap N_{s_j} \geq 2$ ($(i,j) = \{(1, 2), (2, 1)\}$), $|C_1 \bigcup \{s_1\}| = n_1$, $|C_2 \bigcup \{s_2\}| = n_2$, and $G[V_1]$ and $G[V_2]$ are bi-connected (see Fig. \ref{Fig:two-components} for an illustration).
\end{cl}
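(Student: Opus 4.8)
The plan is to exploit that $C_1$ and $C_2$, being distinct connected components of $G-\{s_1,s_2\}$, are joined by no edge of $G$; this rigidly constrains any bipartition that separates $s_1$ from $s_2$, so that the statement reduces to checking the one natural candidate $V_1=C_1\cup\{s_1\}$, $V_2=C_2\cup\{s_2\}$.

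First I would pin down the shape of an arbitrary admissible bipartition. Let $G[V_1],G[V_2]$ be any bipartition into bi-connected subgraphs with $s_1\in V_1$ and $s_2\in V_2$. Since there is no $C_1$--$C_2$ edge and $s_2\notin V_1$, inside $G[V_1]$ every path between $V_1\cap C_1$ and $V_1\cap C_2$ must run through $s_1$; hence if both $V_1\cap C_1$ and $V_1\cap C_2$ were non-empty, $s_1$ would be an articulation point of $G[V_1]$, contradicting its bi-connectivity. Therefore $V_1\setminus\{s_1\}$ lies inside a single component, and symmetrically $V_2\setminus\{s_2\}$ lies inside a single component; as these two sets partition $C_1\cup C_2$, relabelling the components forces $V_1=C_1\cup\{s_1\}$ and $V_2=C_2\cup\{s_2\}$ (in particular both components are non-empty, i.e.\ $k=2$), and then $|C_1\cup\{s_1\}|=|V_1|=n_1$ and $|C_2\cup\{s_2\}|=|V_2|=n_2$.

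Next, still in the ``only if'' direction, I would read off the neighbourhood conditions. Each $G[V_i]=G[C_i\cup\{s_i\}]$ is bi-connected on at least three vertices, so $s_i$ has degree at least $2$ in it, and all its neighbours there lie in $C_i$; hence $|C_i\cap N_{s_i}|\ge 2$ --- this is exactly Property~\ref{po:degree3}, which via the edge $(s_1,s_2)$ also yields Property~\ref{po:degree32}. The conditions on $|C_i\cap N_{s_j}|$ for $i\ne j$ are forced instead by bi-connectivity of $G$ itself: if $C_i\cap N_{s_j}=\emptyset$, then $C_i$ attaches to the rest of $G$ only through $s_i$, so $s_i$ is an articulation point of $G$, contradicting $2$-connectedness --- this is the non-articulation observation already noted just before the claim. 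For the converse, if the listed conditions hold then $V_1:=C_1\cup\{s_1\}$ and $V_2:=C_2\cup\{s_2\}$ is at once a partition of $V$ with $s_1\in V_1$, $s_2\in V_2$, with the prescribed sizes $n_1,n_2$, and with $G[V_1]$ and $G[V_2]$ bi-connected; so it is the required bipartition. Together with Claim~\ref{cl:leaf-block-par} (the case $k>2$) this isolates precisely what the decision algorithm must test.

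\textbf{The main obstacle.} The delicate point is the first step: one must use bi-connectivity of the pieces, not merely their connectivity, to force the bipartition to be $C_1\cup\{s_1\}$ and $C_2\cup\{s_2\}$. The related care point is the degenerate case $k=1$, where the separation argument at $s_1$ is vacuous and an admissible bipartition need not have the form $C_i\cup\{s_i\}$ (one may be forced to split the single component $C_1$); this case should be handled separately --- ruled out by the size constraints, or reduced to $k=2$ or $k>2$ by choosing a different edge $(s_1,s_2)$.
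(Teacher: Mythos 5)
The paper states Claim~\ref{cl:two-components} without any proof (it follows directly from Properties~\ref{po:degree3} and~\ref{po:degree32} in the text, neither of which is proved either), so there is nothing of the authors' to compare yours against; judged on its own, your reconstruction supplies the right structural core but contains one genuine gap. The correct and valuable part is the first step: any bi-connected bipartition with $s_1\in V_1$, $s_2\in V_2$ must have $V_1=C_1\cup\{s_1\}$ and $V_2=C_2\cup\{s_2\}$, because a part meeting both components would have its own terminal $s_i$ as an articulation point; and then $|C_i\cap N_{s_i}|\ge 2$ follows from the minimum degree of $s_i$ in the bi-connected part $G[V_i]$ (this is exactly Property~\ref{po:degree3}). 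That is the content the decision algorithm actually needs.

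The gap is in your treatment of the cross conditions $|C_i\cap N_{s_j}|\ge 2$ for $i\ne j$. The argument you invoke --- if $C_i\cap N_{s_j}=\emptyset$ then $s_i$ is an articulation point of $G$ --- only excludes the empty intersection, i.e.\ it proves $|C_i\cap N_{s_j}|\ge 1$, not $\ge 2$, and the stronger bound is genuinely not a consequence of the hypotheses. Concretely, take the two $4$-cycles $s_1 a b c s_1$ and $s_2 d e f s_2$, add the edge $(s_1,s_2)$ and the single cross edges $(s_2,a)$ and $(s_1,d)$: the resulting graph is bi-connected, removing $s_1,s_2$ leaves exactly the two components $C_1=\{a,b,c\}$ and $C_2=\{d,e,f\}$, and $V_1=C_1\cup\{s_1\}$, $V_2=C_2\cup\{s_2\}$ is a valid bi-connected bipartition, yet $|C_1\cap N_{s_2}|=1$. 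So the ``only if'' direction of Claim~\ref{cl:two-components} as literally stated is false, and no argument can close the step you elided; the honest move is to flag that the cross conditions must be weakened to $|C_i\cap N_{s_j}|\ge 1$ (which your articulation-point observation does establish), rather than to assert that bi-connectivity of $G$ forces $\ge 2$. The ``if'' direction and your remarks on the degenerate one-component case are unaffected.
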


\begin{figure}
\center
\subfigure[]{\includegraphics[width=7.5cm]{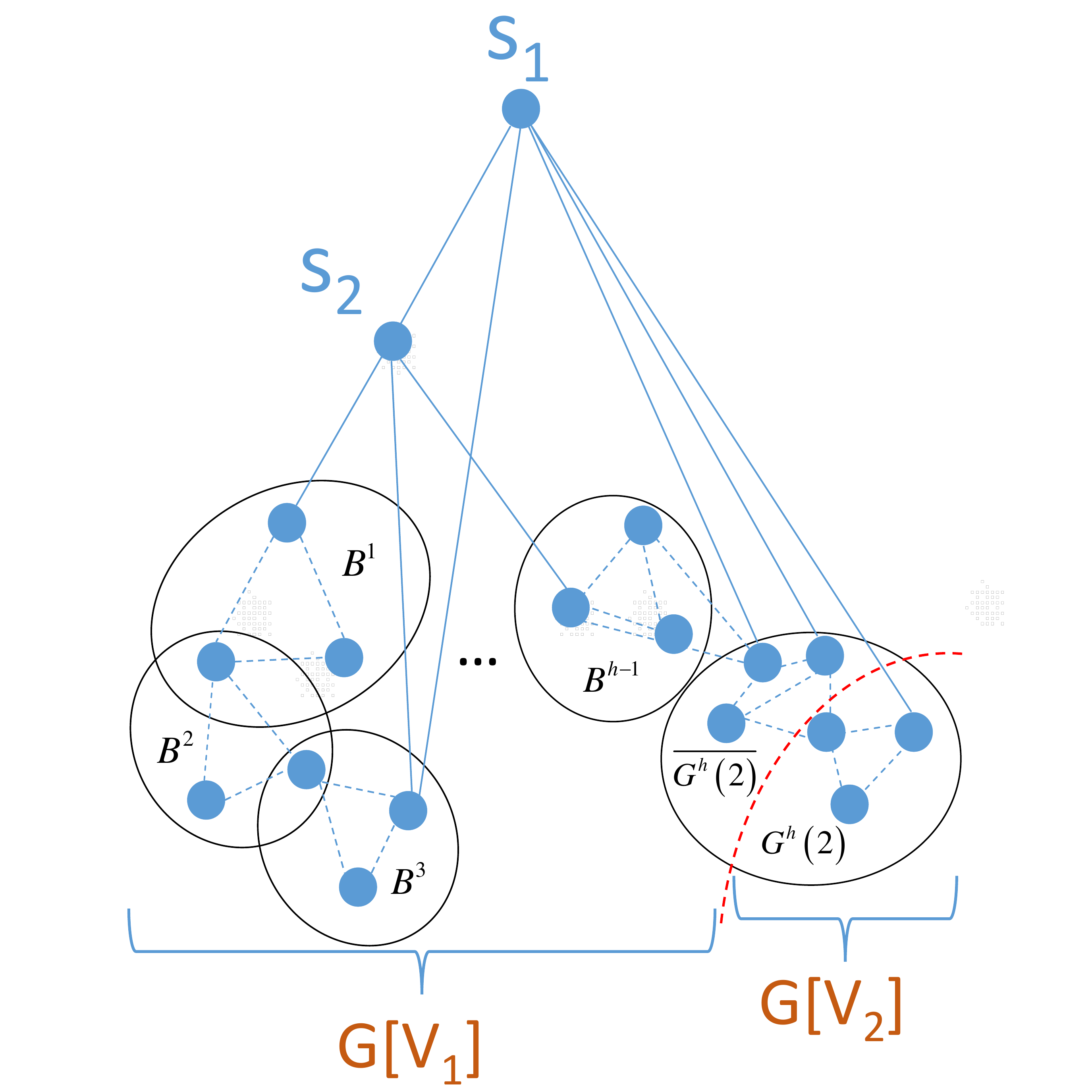} \label{Fig:one-component-together}}
\subfigure[]{\includegraphics[width=7.5cm]{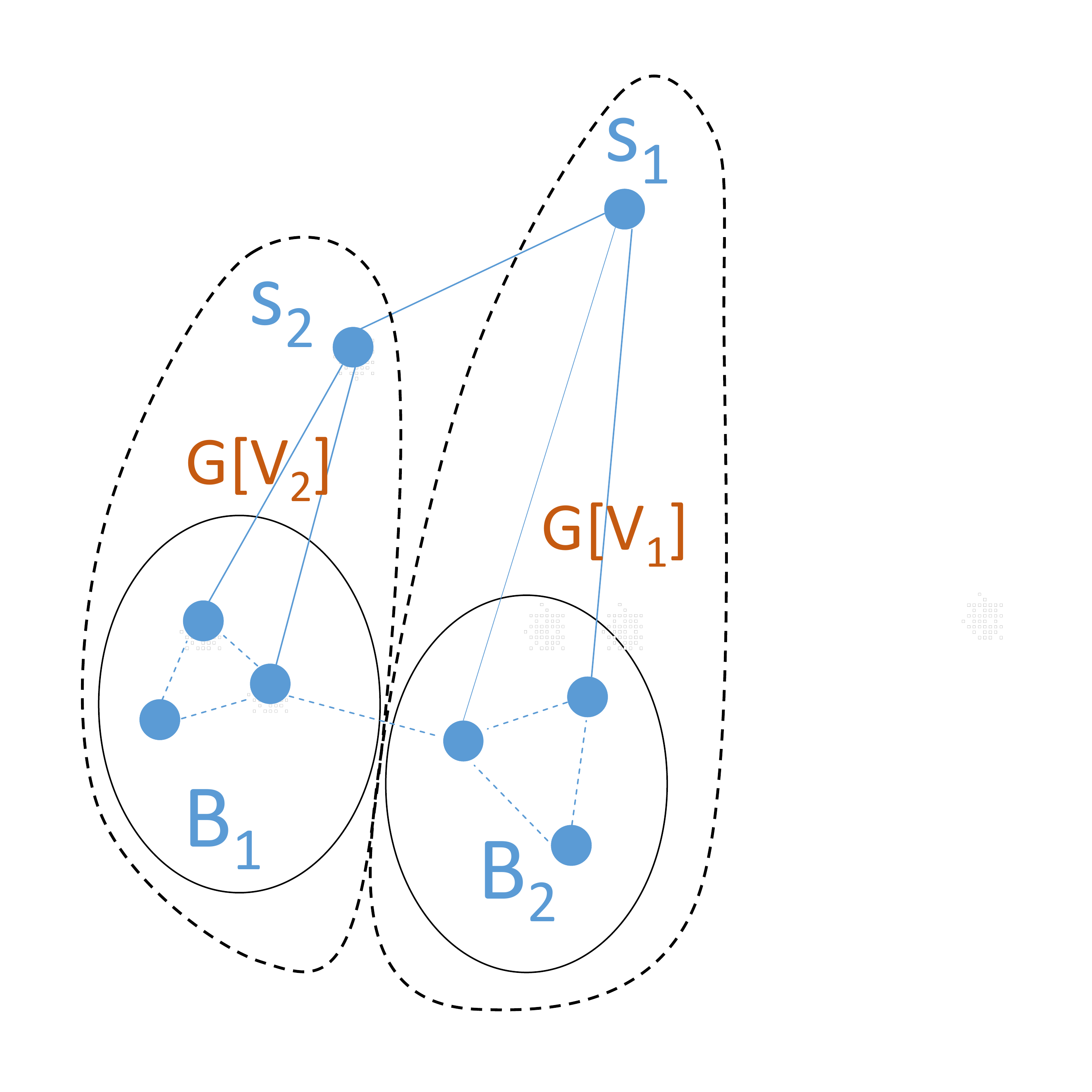} \label{Fig:one-component-separate}}
\caption{An example of graph $G$ with only one $\textit{component}$ and $s_1, s_2 \in V_1$ (a), $s_1 \in V_1, s_2 \in V_2$ (b).}
\end{figure}

We continue expanding the {\it R-BBG$_2$} problem on bi-connected graph in the third case such that
$\forall (s_1, s_2) \in E$, the removals of $s_1$ and $s_2$ lead to a single connected component $C$.
Suppose $C$ can be decomposed into $h$ blocks ($C$ is not bi-connected). Let $\mathcal{B} = \{B^1, B^2,\ldots, B^h \}$ ($h \geq 2$) be the set of blocks of the component $C$. Then, suppose $G$ can be partitioned into 2 bi-connected graphs $G[V_1]$ and $G[V_2]$ such that:

\begin{enumerate}
  \item $s_1, s_2 \in V_1$ (we also may have $s_1, s_2 \in V_2$, without loss of generality, we assume $s_1, s_2 \in V_1$). Then there exists a leaf block $B^h \in \mathcal{B}$ in which $B^h$ can be partitioned into 2 subgraphs $G^h(2)$ and $\overline{G^h(2)}$, in which $G[V_2] = G^h(2)$, $G^h(2)$ is bi-connected subgraphs, $\overline{G^h(2)} \bigcap N_{s_1} \neq \emptyset$, the graph which is induced by $\overline{G^h(2)} \bigcup \{s_1\}$ is also bi-connected (an illustration is shown in Fig. \ref{Fig:one-component-together}). Note that if $G[V_2] = B^h$ then $\overline{G^h(2)} = \emptyset$.\\

  \item $s_1 \in V_1$, and $s_2 \in V_2$ (an example is shown in Fig. \ref{Fig:one-component-separate}). Then it is easy to verify that $h = 2$ (otherwise either $s_1$ or $s_2$ will be an articulation point of $G[V_1]$ or $G[V_2]$). Since $h =2$, without loss of generality, suppose $B_2$ is a leaf block of $C$. Then, we have:
	
	\begin{itemize}
  		\item $B_2 \bigcap N_{s_1} \geq 2$ and $G[V_1]$ is induced by $B_2 \bigcup \{s_1\}$.\\
  		\item $B_1 \bigcap N_{s_2} \geq 2$ and $G[V_2]$ is induced by $B_1 \bigcup \{s_2\}$.
	\end{itemize}
\end{enumerate}

Now we consider a special case in which $C$ is bi-connected subgraph. Suppose $C$ can be decomposed into 2 bi-connected subgraphs $G_1(1)$ and $G_1(2)$. Then, $G$ can be partitioned into 2 bi-connected subgraphs $G[V_1]$ and $G[V_2]$ if either:

\begin{enumerate}
  \item $G_1(1) \bigcap N_{s_2} \geq 2$ and $G[V_1]$ is induced by $G_1(1) \bigcup \{s_2\}$, and $G_1(2) \bigcap N_{s_1} \geq 2$ and $G[V_2]$ is induced by $G_1(2) \bigcup \{s_1\}$.\\
  \item or $s_1, s_2 \in V_1$. Then $G[V_1]$ is induced by $G_1(1) \bigcup \{s_1, s_2\}$ and $G[V_2] = G_1(2)$ in which both are bi-connected.
\end{enumerate}

\begin{figure}
\center
\subfigure[]{\includegraphics[width=7cm]{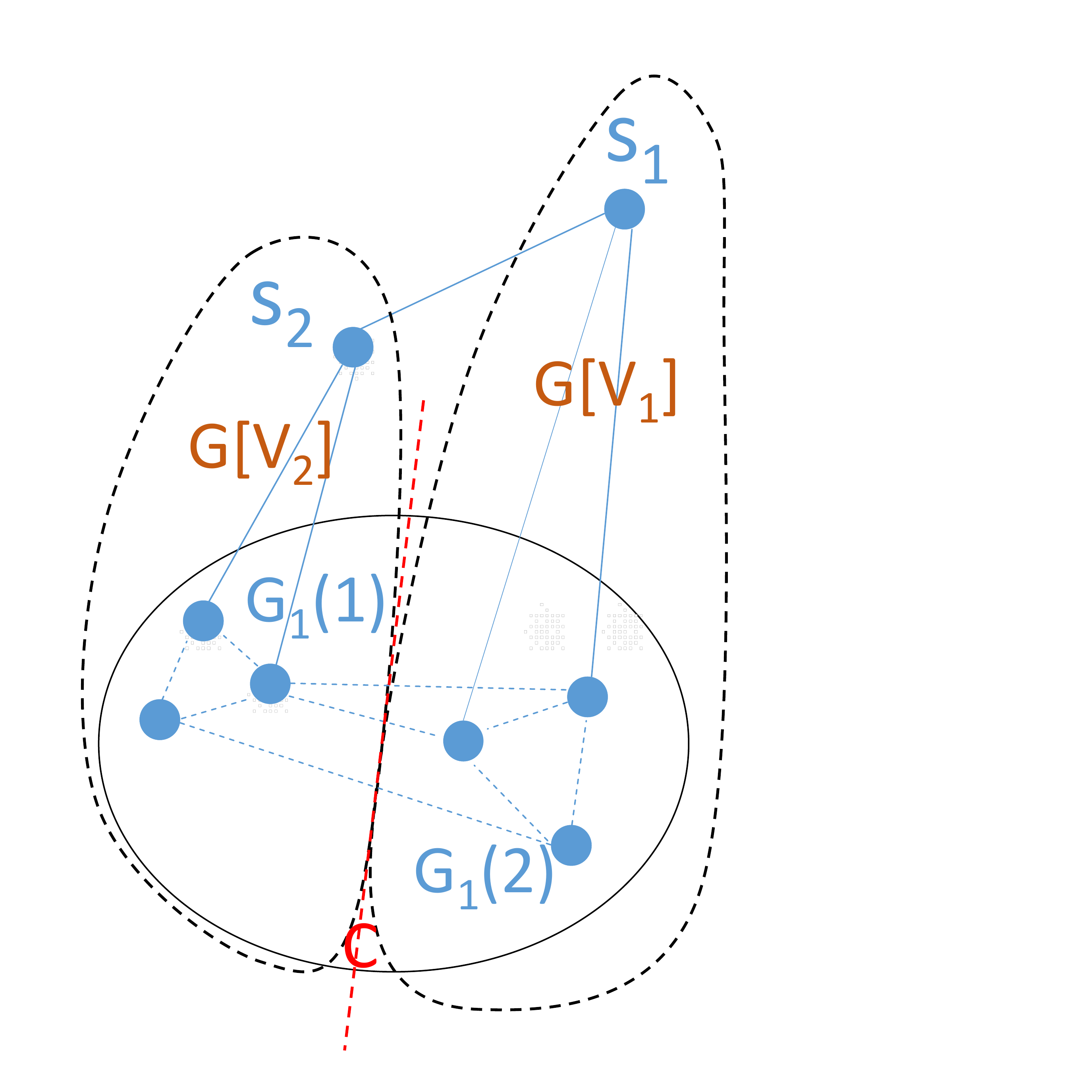} \label{Fig:one-component-one-block-a}}
\subfigure[]{\includegraphics[width=7cm]{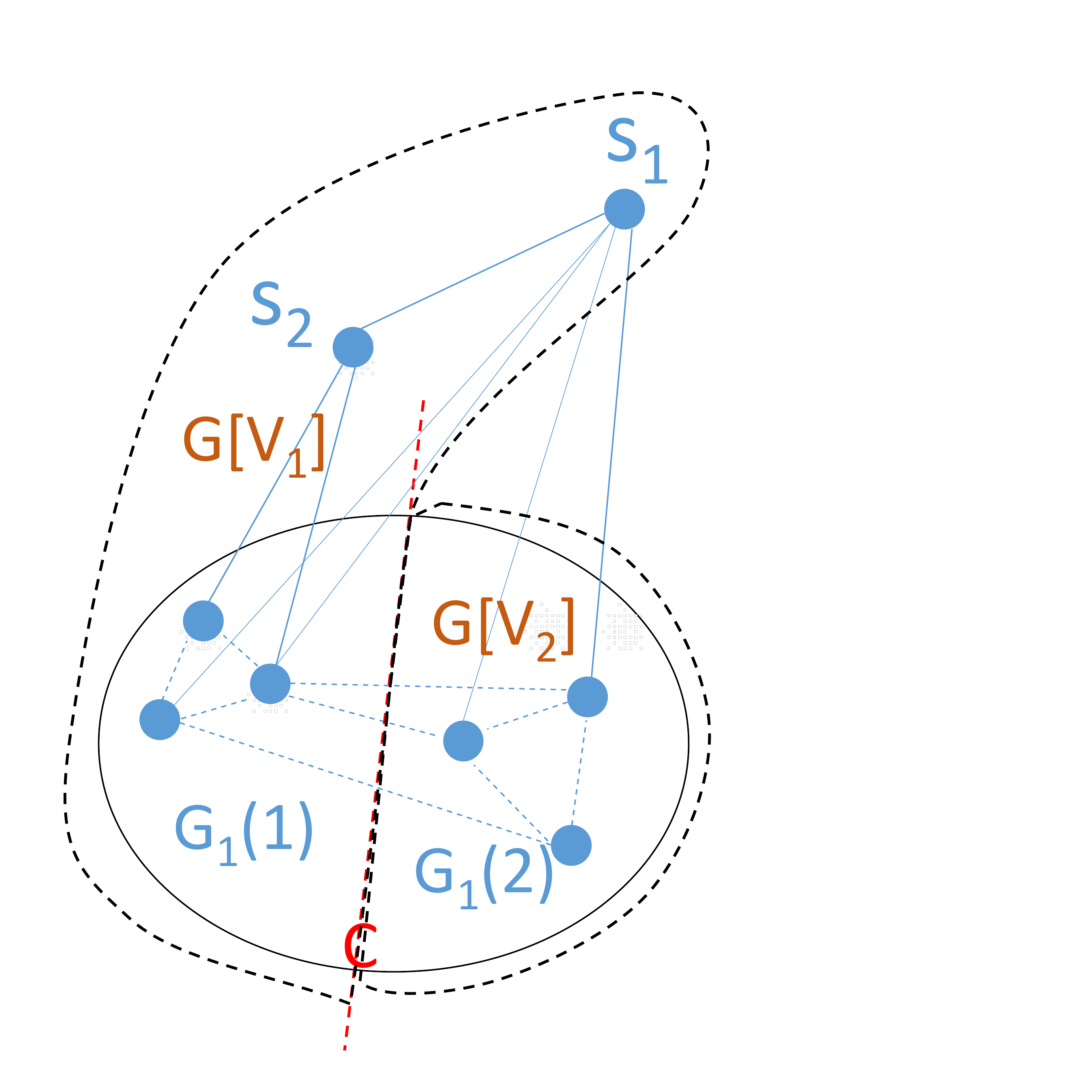} \label{Fig:one-component-one-block-b}}
\caption{An example of graph $G$ with one $\textit{component}$ $C$, and $C$ contains only one $\textit{block}$ in which $s_1 \in V_1$, $s_2 \in V_2$ (a), and $s_1, s_2 \in V_1$ (b).}
\label{Fig:one-component-one-block}
\end{figure}

For example, in Fig. \ref{Fig:one-component-one-block} the graph contains only one component $C$ if removing $s_1$ and $s_2$ and all relative links. The component $C$ then is decomposed into $2$ bi-connected subgraphs $G_1(1)$ and $G_1(2)$ in which in Fig. \ref{Fig:one-component-one-block-a} $s_1 \in V_1$, $s_2 \in V_2$, and in Fig. \ref{Fig:one-component-one-block-b} $s_1, s_2 \in V_1$.

From above results we have:

\begin{cl} \label{cl:one-two-chidren}
Given a bi-connected graph $G(V,E)$, a sequence pair of numbers $(n^1_1, n^1_2), \ldots, (n^p_1, n^p_2)$ $\in N^*$, and $(s^1_1, s^1_2), (s^2_1, s^2_2), \ldots, (s^p_1, s^p_2) \in E$ (even though we have supposed that $n^i_1 \approx n^i_2 \forall 1 \leq i \leq p$, in general we may have $n^i_1 \neq n^i_2$).

\begin{enumerate}
  \item There is feasible solution for {\it R-BBG$_2$} problem such that in every step $s^i_1 \in V^i_1$, $s^i_2 \in V^i_2$, $w(V^i_1) = n^i_1$, $w(V^i_2) = n^i_2$, and $G_i[V_1], G_i[V_2]$ are bi-connected ($G_i[V_1]$ and $G_i[V_2]$ are induced by $V^i_1$ and $V^i_2$, respectively) if either:
  		\begin{itemize}
  			\item the removals of $s^i_1$ and $s^i_2$ lead to only one component.
  			\item or the removals of $s^i_1$ and $s^i_2$ lead to at most two components.
		\end{itemize}
\end{enumerate}
\end{cl}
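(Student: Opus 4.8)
The plan is to derive the claim as a corollary of the structural results already established --- Claim \ref{cl:two-components} together with the case analysis for a single component --- packaged into an induction that drives Algorithm \ref{alg:recursive}, with Theorem \ref{tr:lovasz} guaranteeing that the prescribed sizes are attainable.

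First I would induct on the number of vertices $n$ of the bi-connected graph currently being processed (equivalently, on the number of remaining bipartition steps). In the base case the current graph is small enough that after one split both target parts have at most $2$ vertices, so Algorithm \ref{alg:recursive} stops and there is nothing to prove. For the inductive step, write $G$ for the current bi-connected graph, $(s_1,s_2) := (s^1_1,s^1_2) \in E$ for the prescribed edge, and $(n_1,n_2) := (n^1_1,n^1_2)$ for the prescribed sizes; by hypothesis the removal of $s_1,s_2$ leaves at most two components.

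Next I would break into the two stated subcases. If the removal leaves two components $C_1, C_2$, I invoke Claim \ref{cl:two-components}: after relabelling so that $|C_1 \cup \{s_1\}| = n_1$, the required degree conditions $C_j \cap N_{s_i} \ge 2$ hold because $G$ is bi-connected and by Property \ref{po:degree3} (a component meeting $s_i$ in a single vertex would force an articulation point of the corresponding side, hence of $G$). Thus $G[V_1] = G[C_1 \cup \{s_1\}]$ and $G[V_2] = G[C_2 \cup \{s_2\}]$ are both bi-connected, of the prescribed sizes, and strictly smaller than $G$, so the induction hypothesis applies to each, using the next entries of the two input sequences. If the removal leaves a single component $C$, I follow the analysis preceding the claim: when $C$ is itself bi-connected, Theorem \ref{tr:lovasz} applied to the $2$-connected graph $C$ yields two connected pieces of the needed sizes, and attaching $s_1$ and $s_2$ --- either both to one piece, or split so that each piece keeps at least two of its neighbours --- produces two bi-connected subgraphs; when $C$ is not bi-connected, I decompose it into blocks $B^1,\dots,B^h$, use Lemma \ref{lm:leaf-block} to locate a leaf block $B^h$ touching $N_{s_1} \cup N_{s_2}$, and carve $G[V_2]$ out of $B^h$ as a bi-connected subgraph $G^h(2)$ of size $n_2$ whose complement $\overline{G^h(2)}$ is bi-connected and still meets $N_{s_1}$, exactly as in Claim \ref{cl:leaf-block-par}. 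In every subcase the two resulting subgraphs are bi-connected and have fewer than $n$ vertices, so induction closes the proof.

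The main obstacle is verifying that at each step one can meet all three demands at once: both pieces bi-connected, the exact prescribed sizes, and a bipartition whose pieces again fall under the ``at most two components'' hypothesis so that the recursion proceeds. The delicate point is the block-carving step --- choosing which component, or which leaf block, to cut and in what proportion so that the bi-connected subgraph of prescribed size exists and its complement remains bi-connected and still adjacent to $s_1$. This is where Lemma \ref{lm:leaf-block} and Claim \ref{cl:leaf-block-par} carry the argument; once those structural facts are available, the size bookkeeping and the induction are routine.
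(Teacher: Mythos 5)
The paper itself offers no proof of Claim~\ref{cl:one-two-chidren}; it is stated as a summary (``From above results we have'') of the preceding case analysis, so there is no official argument to match yours against. Judged on its own terms, your proposal has several genuine gaps, all stemming from treating necessary conditions as sufficient ones.

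First, you invoke Property~\ref{po:degree3} to conclude that the degree conditions $C_j \cap N_{s_i} \ge 2$ ``hold because $G$ is bi-connected.'' That reverses the implication: Property~\ref{po:degree3} \emph{assumes} a bi-connected bipartition exists and \emph{deduces} the degree bounds. Bi-connectivity of $G$ alone only gives $C_i \cap N_{s_j} \ge 1$ (as the paper notes right after Property~\ref{po:2-connected}), and a component attached to $s_1$ by a single edge makes $G[C_1 \cup \{s_1\}]$ fail to be bi-connected. Second, in the two-component case the bipartition is forced ($V_1 = C_1 \cup \{s_1\}$, $V_2 = C_2 \cup \{s_2\}$ up to swapping), so there is no freedom to ``relabel so that $|C_1 \cup \{s_1\}| = n_1$''; the prescribed sizes are simply met or not, and Claim~\ref{cl:two-components} correctly lists this as a hypothesis rather than a consequence. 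Third, Theorem~\ref{tr:lovasz} yields \emph{connected} parts, not bi-connected ones; attaching $s_1$ or $s_2$ to a connected piece does not in general produce a bi-connected subgraph, so your single-component argument does not go through. Finally, the existence of a bi-connected $G^h(2)$ of size exactly $n_2$ inside a leaf block with bi-connected complement meeting $N_{s_1}$ is a condition appearing in the ``if and only if'' of Claim~\ref{cl:leaf-block-par}, not a guaranteed fact --- the paper even concedes that searching for it may take exponential time. You correctly flag the recursion-closure issue (that the pieces must again satisfy the component hypothesis for the next prescribed edge) as the main obstacle, but the proof never resolves it. In short, the claim as stated is an ``if'' that cannot be proved without importing the additional size, degree, and bi-connectivity hypotheses of Claims~\ref{cl:two-components} and~\ref{cl:leaf-block-par}, and your argument silently assumes them.
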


\section{Decision Algorithm} \label{section:decision}

\begin{figure}
\center
\subfigure{\includegraphics[width=15cm]{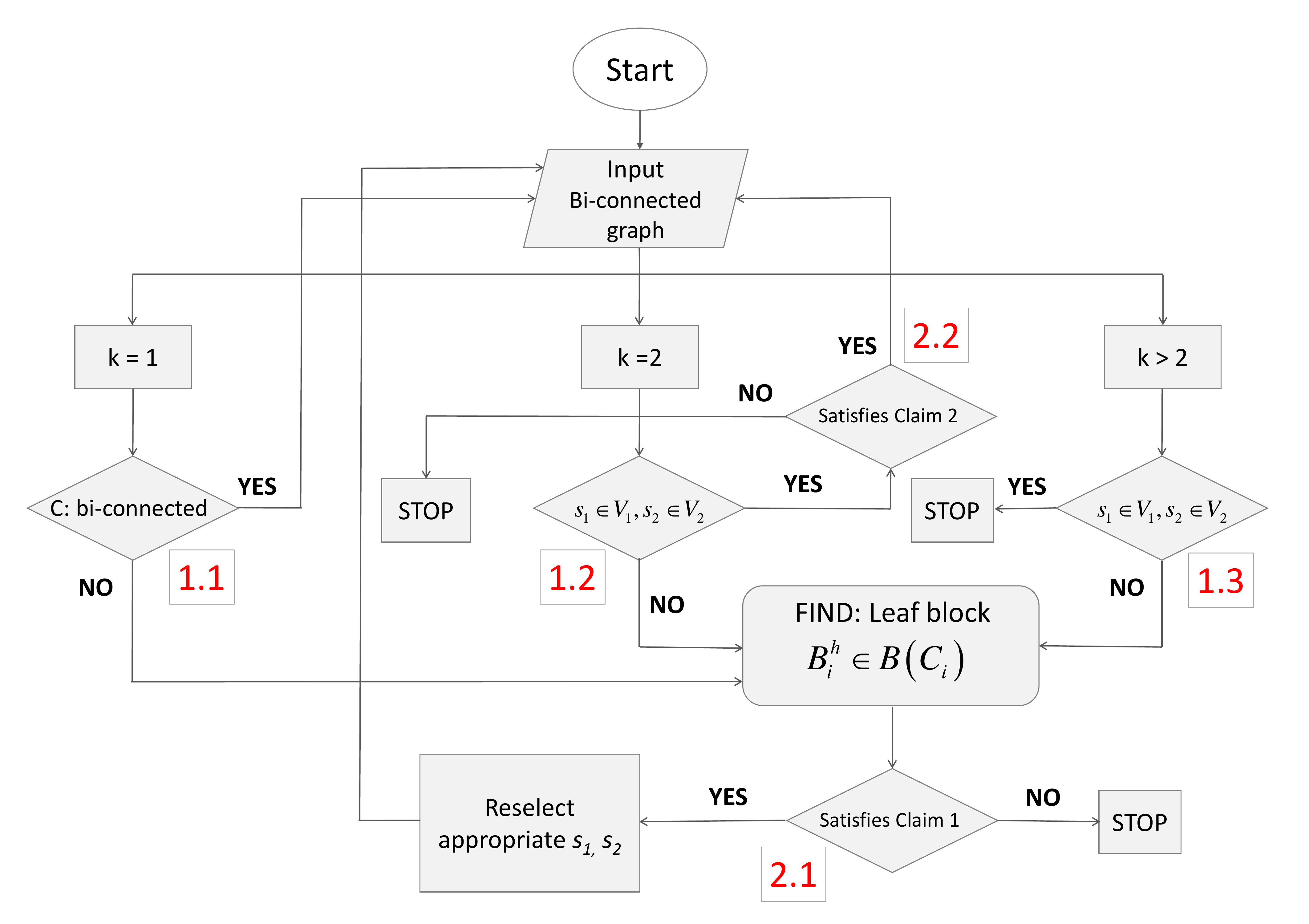}}
\caption{Flowchart of the decision algorithm.}
\label{Fig:decision}
\end{figure}

Let $C$ denote the set of components of $G$ induced by removals of $s_1$ and $s_2$.
We have $C = C_1, C_2, \ldots, C_k$. The details of the algorithm are shown in the flowchart (Fig. \ref{Fig:decision}). Referring to the flowchart, in step $\bf{1.1}$ if the unique component $C$ of graph is bi-connected, we can recursively solve the problem by solving the problem with a reducing $C$ to a graph $G'$ in which $G' = G \backslash \{s_1,s_2\}$ is still connected.

Continue examining the flowchart we see that since removals of $s_1$ and $s_2$ lead to two or more components as shown in steps $\bf{1.2}$ and $\bf{1.3}$, the requirement of $s_1, s_2$ to be on two different parts after partition causes the algorithm to be stopped in general. For $|C| \geq 2$ the only case makes the {\it bipartition decision} valid with $s_1 \in V_1, s_2 \in V_2$ is in the step $\bf{2.2}$ if the removals of $s_1$ and $s_2$ lead to at most $2$ components, and satisfies Claim \ref{cl:two-components}. The details of the behavior of the algorithm on bi-connected graph $G$ in the steps $\bf{1.2}$ and $\bf{1.3}$ indicated in Lemma \ref{lm:k-3} and Claim \ref{cl:two-components}.

Moreover, every passage through the $\bf{NO}$ branch of step $\bf{1}$ leads to finding a leaf block $B^h_i$ in component $C_i$ ($1 \leq i \leq k$). Step $\bf{2}$ of the flowchart illustrates for the behavior of the algorithm on $G$ indicated by Claim \ref{cl:leaf-block-par}.

\begin{thm} \label{lm:time-complex}
Decision algorithm solves problem with a time complexity $O(m^2nt|\mathcal{B}|)$
\end{thm}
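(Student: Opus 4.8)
The plan is to read the running time directly off the flowchart in Figure~\ref{Fig:decision}, charging each box a cost and multiplying by the number of times it can be entered. Throughout, write $m = |E|$ and $n = |V|$; let $t$ denote the number of nested recursive decision rounds the procedure performs, and recall that $\mathcal{B}$ is the set of blocks of the components $C_1,\dots,C_k$ obtained after a removal. First I would fix one candidate edge $(s_1,s_2)\in E$ and bound the work of a single round for that edge; since the outer loop ranges over all edges, this already contributes one factor of $m$.

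For a fixed $(s_1,s_2)$, step~$\mathbf{1}$ removes $s_1,s_2$ together with their incident links and computes the components $C_1,\dots,C_k$ in one traversal, $O(n+m)$. Branching on $k$: in step~$\mathbf{1.1}$ ($k=1$, $C$ bi-connected) we test bi-connectivity once ($O(n+m)$ by Hopcroft--Tarjan) and pass to the reduced graph $G' = G\setminus\{s_1,s_2\}$; in steps~$\mathbf{1.2}$, $\mathbf{1.3}$ and $\mathbf{2.2}$ ($k\ge 2$) the decision is settled by a constant number of connectivity / bi-connectivity checks against the explicit conditions of Claim~\ref{cl:two-components} (for $k\le 2$) and Lemma~\ref{lm:k-3} (which forbids $k>2$ with $s_1,s_2$ separated), each check costing $O(n+m)=O(m)$. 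Along the $\mathbf{NO}$ branch of step~$\mathbf{1}$ we build the block tree $B(C_i)$ of each component, again $O(n+m)$ in total, and locate a leaf block $B_i^h$ via Lemma~\ref{lm:leaf-block}.

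The remaining and dominant cost is the leaf-block test of Claim~\ref{cl:leaf-block-par}: for a leaf block $B_i^h$ we must decide whether it splits into a bi-connected $G_i^h(2)$ of size $n_2$ and a bi-connected $\overline{G_i^h(2)}$ meeting $N_{s_1}$. I would realize this by invoking the same decision procedure recursively on $B_i^h$, seeded at each admissible vertex/edge; there are at most $|\mathcal{B}|$ leaf blocks to try, for each at most $O(n)$ seed choices, each trial verifying bi-connectivity of the two candidate pieces in $O(m)$ time, and the recursion nested at most $t$ deep. Multiplying the loop bounds — $O(m)$ outer edge choices, $O(m)$ per bi-connectivity verification, $O(n)$ seeds, $O(|\mathcal{B}|)$ leaf blocks, and depth $O(t)$ — yields the claimed $O(m^2 n t |\mathcal{B}|)$, the per-round $O(n+m)$ traversals and block decompositions being absorbed into these factors.

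The step I expect to be the main obstacle is arguing that the leaf-block subroutine is genuinely polynomial rather than a disguised subset enumeration: using the characterizations of Claims~\ref{cl:leaf-block-par}, \ref{cl:two-components}, \ref{cl:one-two-chidren} together with Lovász's Theorem~\ref{tr:lovasz}, one must show that whenever a valid split of $B_i^h$ exists it is discovered by the bounded search over seeds, and that every recursive call is made on a strictly smaller block, so that no more than $|\mathcal{B}|$ distinct blocks and at most $t$ nesting levels ever arise. A secondary point to pin down is completeness of the flowchart — that step~$\mathbf{1.1}$, the $k\le 2$ branch, and the leaf-block branch jointly exhaust all cases — so that the procedure really decides {\it R-BBG$_2$} within the stated time; this part rests entirely on the lemmas and claims already established in Section~\ref{section:bipartition}.
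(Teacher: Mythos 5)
Your accounting reaches the same formula $O(m^2 n t |\mathcal{B}|)$, but the symbols do not mean what the paper means by them, and the step you yourself flag as the main obstacle is exactly the one your bound does not cover. In the paper's proof, $t$ is \emph{not} the recursion depth: $O(t)$ is defined to be the time complexity of finding the subgraph $G^h_i(2)$ inside a leaf block $B^h_i$ such that $G^h_i(2)=G[V_2]$, and the paper explicitly concedes that this subroutine ``may take exponential time,'' excusing it only by the informal remark that leaf blocks are small. The remaining factors are assembled as: at most $m$ iterations of the decision algorithm, times $O(mn)$ per iteration to decompose components into blocks and locate leaf blocks, times $|\mathcal{B}|$ leaf blocks each costing $O(t)$. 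Your version instead spends the factor $t$ on recursion depth and replaces the leaf-block search by ``$O(n)$ seed choices, each verified in $O(m)$.'' That substitution is unjustified: deciding whether a leaf block splits into a bi-connected $G^h_i(2)$ of prescribed size $n_2$ with a bi-connected complement meeting $N_{s_1}$ is not reducible to $O(n)$ seeded bi-connectivity checks by anything established in Claims~\ref{cl:leaf-block-par}, \ref{cl:two-components}, or \ref{cl:one-two-chidren}, and Lov\'asz's theorem only guarantees \emph{connected} parts, not bi-connected ones. You correctly identify this as the crux (``a disguised subset enumeration''), but you neither close it nor charge for it, so under your definitions the product $m^2 n t|\mathcal{B}|$ does not actually bound the algorithm's running time.

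To match the paper you would need to either (a) adopt the paper's convention and let $t$ absorb the (possibly exponential) cost of the leaf-block splitting subroutine, in which case the theorem is essentially a bookkeeping identity rather than a polynomiality claim, or (b) actually prove that the leaf-block split is discoverable in polynomial time, which is a substantive result the paper does not provide. As written, your proposal silently assumes (b) while also repurposing $t$ for recursion depth, so the two roles of $t$ collide and the bound has a genuine gap.
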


\begin{proof}
If the algorithm stops at steps $\bf{1.3}$, $\bf{2.1}$, and $\bf{2.2}$, by Lemma \ref{lm:k-3}, Claim \ref{cl:two-components} and \ref{cl:leaf-block-par}, we know that the investigated problem is infeasible. We consider the case that problem is feasible.
In each iteration of the algorithm, it takes $O(mn)$ time to decompose every component to blocks and find leaf block. Let $|\mathcal{B}|$ be the number of leaf blocks in each iteration. In addition, let $O(t)$ be the time complexity of finding the subgraph $G^h_i(2)$ of any leaf block $B^h_i$ of any component $C_i$ such that $G^h_i(2) = G[V_2]$. We know that the time complexity of finding $G^h_i(2)$ may take exponential time; however, the number of nodes in each leaf block is very small. Because at most $m$ ($m = |E|$) iterations are in the decision algorithm, the time complexity is bounded in $O(m^2nt|\mathcal{B}|)$. This completes the proof.
\end{proof}


\section{Conclusion} \label{section:conclude}

In this paper we investigated the problem of recursive bipartition ({\it R-BBG$_2$}) of bi-connected graph. Since finding the exact solution for q-partition of $G(V,E)$ (partition $G$ into $q$ pieces) in general is a NP-hard problem, we proposed a decision algorithm to recursively bipartition graph with fairness assumption that the input graph is bi-connected. To qualify the input graph, we consider all possible cases in which graph contains one, two, and multiple components (the subgraphs were induced by the removals of two particular nodes $s_1, s_2 \in V$). Theoretical analysis of the decision algorithm shows that it can efficiently solve the {\it R-BBG$_2$} problem on bi-connected graph.

\bibliographystyle{ieeetr}
\bibliography{my}

\begin{thebibliography}{10}

\bibitem{LUCERTINI1993227}
M.~Lucertini, Y.~Perl, and B.~Simeone, ``Most uniform path partitioning and its
  use in image processing,'' {\em Discrete Applied Mathematics}, vol.~42,
  no.~2, pp.~227 -- 256, 1993.

\bibitem{MARAVALLE1997217}
M.~Maravalle, B.~Simeone, and R.~Naldini, ``Clustering on trees,'' {\em
  Computational Statistics \& Data Analysis}, vol.~24, no.~2, pp.~217 -- 234,
  1997.

\bibitem{BECKER1983101}
R.~I. Becker and Y.~Perl, ``Shifting algorithms for tree partitioning with
  general weighting functions,'' {\em Journal of Algorithms}, vol.~4, no.~2,
  pp.~101 -- 120, 1983.

\bibitem{Lucertini1989}
M.~Lucertini, Y.~Perl, and B.~Simeone, {\em Image enhancement by path
  partitioning}, pp.~12--22.
\newblock Berlin, Heidelberg: Springer Berlin Heidelberg, 1989.

\bibitem{Wang2013}
L.~Wang, Z.~Zhang, D.~Wu, W.~Wu, and L.~Fan, ``Max-min weight balanced
  connected partition,'' {\em Journal of Global Optimization}, vol.~57,
  pp.~1263--1275, Dec 2013.

\bibitem{Becker2001}
R.~Becker, I.~Lari, M.~Lucertini, and B.~Simeone, ``A polynomial-time algorithm
  for max-min partitioning of ladders,'' {\em Theory of Computing Systems},
  vol.~34, pp.~353--374, Aug 2001.

\bibitem{Perl:1981:MTP:322234.322236}
Y.~Perl and S.~R. Schach, ``Max-min tree partitioning,'' {\em J. ACM}, vol.~28,
  pp.~5--15, Jan. 1981.

\bibitem{DYER1985139}
M.~Dyer and A.~Frieze, ``On the complexity of partitioning graphs into
  connected subgraphs,'' {\em Discrete Applied Mathematics}, vol.~10, no.~2,
  pp.~139 -- 153, 1985.

\bibitem{SUZUKI1990227}
H.~Suzuki, N.~Takahashi, and T.~Nishizeki, ``A linear algorithm for bipartition
  of biconnected graphs,'' {\em Information Processing Letters}, vol.~33,
  no.~5, pp.~227 -- 231, 1990.

\bibitem{Miyano}
T.~Miyano, T.~Nishizeki, N.~Takahashi, and S.~Uneo, ``An algorithm for
  tripartitioning 3-connected graphs,'' {\em Journal of Information Processing
  Society of Japan}, p.~584–592, 1990.

\bibitem{NAKANO1997315}
S.~ichi Nakano, M.~Rahman, and T.~Nishizeki, ``A linear-time algorithm for
  four-partitioning four-connected planar graphs,'' {\em Information Processing
  Letters}, vol.~62, no.~6, pp.~315 -- 322, 1997.

\bibitem{Ma1994}
J.~Ma and S.~Ma, ``Ano(k 2n2) algorithm to find ak-partition in ak-connected
  graph,'' {\em Journal of Computer Science and Technology}, vol.~9,
  pp.~86--91, Jan 1994.

\bibitem{10.1007/978-1-4613-8369-7_3}
A.~George, J.~R. Gilbert, and J.~W.~H. Liu, eds., {\em Automatic Mesh
  Partitioning}, (New York, NY), Springer New York, 1993.

\bibitem{Hendrickson:1995:ISG:203046.203060}
B.~Hendrickson and R.~Leland, ``An improved spectral graph partitioning
  algorithm for mapping parallel computations,'' {\em SIAM J. Sci. Comput.},
  vol.~16, pp.~452--469, Mar. 1995.

\bibitem{doi:10.1137/S1064827593255135}
H.~Simon and S.~Teng, ``How good is recursive bisection?,'' {\em SIAM Journal
  on Scientific Computing}, vol.~18, no.~5, pp.~1436--1445, 1997.

\bibitem{Lovasz1977241}
L.~Lovasz, ``A homology theory for spanning tress of a graph,'' {\em Acta
  Mathematica Academiae Scientiarum Hungaricae}, vol.~30, no.~3-4,
  pp.~241--251, 1977.
\newblock cited By 82.

\end{thebibliography}

\end{document}